\newcolumntype{C}[1]{>{\centering}p{#1}}
\begin{document}
\theoremstyle{definition}
\newtheorem{remark}{Remark}


\newtheorem{theorem}{Theorem}
\newtheorem{definition}[theorem]{Definition} 
\newtheorem{lemma}{Lemma} 
\newtheorem{corollary}[theorem]{Corollary}
\newtheorem{example}{Example}[section]
\newtheorem{proposition}[theorem]{Proposition}


\vspace{-15mm}
\title{Electromagnetic Property Sensing in ISAC with Multiple Base Stations: Algorithm, Pilot Design, and Performance Analysis 
}


\author{   
Yuhua Jiang, Feifei Gao, Shi Jin, and Tie Jun Cui

\thanks{
Y. Jiang and F. Gao are with Institute for Artificial Intelligence, Tsinghua University (THUAI), 
State Key Lab of Intelligent Technologies and Systems, Tsinghua University, 
Beijing National Research Center for Information Science and Technology (BNRist), Beijing, P.R. China (email: jiangyh221600@gmail.com, feifeigao@ieee.org).



S. Jin is with the National Mobile Communications Research 
Laboratory, Southeast University, Nanjing 210096, China (e-mail: jinshi@seu.edu.cn).

T. J. Cui is with the State Key Laboratory of Millimeter Waves, Southeast University, Nanjing 210096, China (e-mail: tjcui@seu.edu.cn).

}
}

\maketitle
\vspace{-15mm}
\begin{abstract}

Integrated sensing and communication (ISAC) has opened up numerous game-changing opportunities for future wireless systems. 
In this paper, we develop a novel scheme that utilizes orthogonal frequency division multiplexing (OFDM) pilot signals to sense the electromagnetic (EM) property of the target and thus identify the materials of the target.
Specifically, we first establish an EM wave propagation model with Maxwell equations, where the EM property of the target is captured by a closed-form expression of the channel. 
We then build the mathematical model for the relative permittivity and conductivity distribution (RPCD) within a predetermined region of interest shared by multiple base stations (BSs). 
\textcolor{blue}{%
By leveraging the Lippmann-Schwinger equation, we propose an EM property sensing method that reconstructs the RPCD using compressive sensing techniques. This approach exploits the joint sparsity of the EM property vector, which enables the proposed method to effectively handle the high dimensionality and ill-posed nature of the inverse scattering problem.}
We then develop a fusion algorithm to combine data from multiple BSs, which can enhance the reconstruction accuracy of EM property by efficiently integrating diverse measurements. 
Moreover, the fusion is performed at the feature level of RPCD and features low transmission overhead. 
We further design the pilot signals that can minimize the mutual coherence of the equivalent channels and enhance the diversity of incident EM wave patterns. 
Simulation results demonstrate the efficacy of the proposed method in achieving high-quality RPCD reconstruction and accurate material classification. 
\end{abstract}


\begin{IEEEkeywords}
Electromagnetic (EM) property sensing, material identification, integrated sensing and communication (ISAC), cooperative sensing, orthogonal frequency division multiplexing (OFDM) 
\end{IEEEkeywords}

\IEEEpeerreviewmaketitle


\section{Introduction} 

Recently, the emergence of integrated sensing and communication (ISAC) has unlocked a multitude of revolutionary possibilities for the forthcoming wireless sensing systems. 
ISAC facilitates the sharing of the limited radio spectrum between sensing and communications (S\&C) systems and thereby significantly reduces resource costs \cite{isac1,mypaper3,isac2}. 
In contrast to the conventional approach of separate functionalities for S\&C, ISAC methodology brings two distinct advantages.
Firstly, by enabling the joint utilization of constrained resources such as spectrum, energy, and hardware platforms, ISAC enhances efficiency for both S\&C and thus yields integration benefits. 
Secondly, the collaborative synergy between S\&C fosters coordination gains and thereby amplifies the overall performance. 
Consequently, ISAC has the potential to improve the capabilities of both S\&C functionalities in the foreseeable future.
Thanks to its myriad benefits, ISAC would play a pivotal role in various forthcoming applications, including intelligent connected vehicles, internet of things (IoT), and smart homes and cities \cite{isac4,isac6,isac10}.


\color{blue}
ISAC with multiple base stations (BSs) has emerged as a crucial area 
of research, which focuses on the coordinated use of resources across a network of BSs to simultaneously enhance S\&C capabilities \cite{co5,co6,co3}. 
This approach leverages shared information and coordinated actions among BSs, optimizing performance through advancements in symbol-level cooperative sensing \cite{wei2023symbol}, joint fronthaul compression and beamforming design \cite{zhang2024optimal}, and UAV trajectory design for coordinated beamforming \cite{cheng2024networked}.
Shared information among BSs collectively improves environmental sensing accuracy at the cost of large transmission overhead between BSs and the central processing unit. The cooperation of multiple BSs is vital for the next-generation networks, enabling applications like autonomous vehicles, smart cities, and advanced surveillance systems, where ISAC technologies offer both high-resolution sensing and low-latency communications \cite{co7,co8,co9, co10}. %
\color{black}

Actually, ISAC technologies primarily serve the purpose of generating an accurate representation that connects the physical world with its communication digital twin counterpart \cite{twin,twin2}. 
Unlike image-based digital twins that focus on shape and location, communication oriented digital twins require reconstructing the communication channel as well as the channel related tasks. 
In this context, accurate electromagnetic (EM) property of the objects in the physical world is vital to build communication digital twins, because it determines the propagation direction, attenuation, and physical phenomena such as diffraction, reflection, and scattering of EM waves that transmit communication signals.
Meanwhile, the property of EM characteristics is essential for the identification of materials in targets and would significantly impact a wide range of industries.
For instance, sensing EM property is applicable in the inspection and imaging of the human body, which caters to the growing needs in social security, environmental surveillance, and medical fields.

Traditional methods sense EM property via infrared detection, which utilizes the unique infrared emissivity property of each material \cite{nature1}. 
However, these methods typically require expensive and specialized equipment that is difficult to produce and sustain \cite{infrared1}. 
Additionally, infrared detection is hindered by its limited ability to penetrate targets, especially when the surface is covered with camouflage, which makes the conventional methods ineffective. 
\color{blue}
The low-frequency EM wave is a promising alternative for reliable EM property sensing because they can penetrate various objects more effectively than infrared waves \cite{material_identification1}.
Due to their longer wavelengths, low-frequency EM waves interact less with small particles and surface irregularities, reducing absorption and scattering.
Unlike infrared detection, low-frequency EM waves in communication band enable practical sensing by leveraging the channel state information (CSI) embedded in transmitted signals, which allows ISAC technologies to identify EM property reliably without the need for specialized equipment.
\color{black}


Although ISAC has demonstrated significant achievements in localization \cite{isac8}, tracking \cite{isac3}, imaging \cite{RIS_image}, as well as various other applications \cite{isac1,isac2,mypaper}, EM property sensing within the ISAC framework has been barely studied. 
\textcolor{blue}{%
To the best of the authors' knowledge, only one work \cite{mypaper4} implements EM property sensing under single-BS ISAC systems. 
However, the estimation accuracy and the resolution under a single-BS ISAC system are not satisfactory, and a single BS may not cover a large sensing region. 
Although multi-BS ISAC systems offer significant potential to improve sensing accuracy by combining observations from different locations, they face challenges in efficiently fusing distributed data with limited transmission overhead. Additionally, \cite{mypaper4} only discusses the reduction of pilot mutual coherence within the same subcarrier. 
Reducing pilot mutual coherence across subcarriers remains unexplored and could enhance sensing reliability and spectrum efficiency in large-scale systems.
}%

In this paper, we develop a novel scheme that utilizes pilot signals in an ISAC system with multiple BSs to sense the EM property of the target and identify the constituent materials. 
The main contributions are summarized as follows:
\begin{itemize} 
\item[$\bullet$] 
We rigorously derive a formulation of EM property sensing equations from the EM scattering theory for multiple user equipments (UEs) and multiple BSs. 
This approach particularly focuses on the accurate estimation of relative permittivity and conductivity distribution (RPCD) in a region of interest shared by all BSs.
\item[$\bullet$] 
\textcolor{blue}{We propose a fusion algorithm to combine data from multiple BSs at the RPCD feature level, which reduces the error of RPCD reconstruction by efficiently integrating signals on different subcarriers and boasts relatively low transmission overhead.}%
\item[$\bullet$]  We design orthogonal frequency division multiplexing (OFDM) pilots across multiple subcarriers to reduce mutual coherence in ISAC systems, which guarantees the feasibility of employing compressive sensing techniques to estimate EM property and thereby secures the reliability and accuracy of the overall sensing process. 
\end{itemize} 
Simulation results reveal that the proposed method successfully delivers high-quality reconstruction of RPCD as well as precise material classification. Moreover, improvements in the quality of RPCD reconstruction and the accuracy of material classification can be achieved by either increasing the signal-to-noise ratio (SNR) at the receivers or utilizing the data from more BSs.





The rest of this paper is organized as follows. 
Section~\uppercase\expandafter{\romannumeral2} presents the system model and formulates the EM property sensing problem.
Section~\uppercase\expandafter{\romannumeral3} elaborates the strategy of sensing EM property. 
Section~\uppercase\expandafter{\romannumeral4} describes the scheme of data fusion with multiple BSs.
Section~\uppercase\expandafter{\romannumeral5} proposes the approach to designing the pilots and identifying the materials of the target.
Section~\uppercase\expandafter{\romannumeral6} provides the numerical simulation results, and 
Section~\uppercase\expandafter{\romannumeral7} draws the conclusion.

Notations: Boldface denotes a vector or a matrix;  $j$ corresponds to the imaginary unit; $(\cdot)^H$, $(\cdot)^T$, and $(\cdot)^*$ represent Hermitian, transpose, and conjugate, respectively; 
$\circ$ denotes the Khatri-Rao product; 
$\odot$ denotes the Hadamard product;
$\otimes$ denotes the Kronecker product;
$\mathrm{vec}(\cdot)$ denotes the vectorization operation; 
$\textrm{diag}(\mathbf{a})$ denotes the diagonal matrix whose diagonal elements are the elements of $\mathbf{a}$; 
$\mathbf{I}$, $\mathbf{1}$, and $\mathbf{0}$ denote the identity matrix, the all-ones vector, and the all-zeros vector with compatible dimensions;   
$\left\Vert\mathbf{a}\right\Vert_2$ denotes $\ell2$-norm of the vector $\mathbf{a}$; 
$\left|\mathbf{a}\right|$ denotes the 
element-wise absolute value of the complex vector $\mathbf{a}$;
$\Re(\cdot)$ and $\Im(\cdot)$ denote the real and imaginary part of complex vectors or matrices, respectively; 
$\text{Tr}(\cdot)$ denotes the trace of a matrix; 
$\textbf{prox}_{f(\cdot)}$ denotes the proximal mapping of the function $f(\cdot)$; 
$\left\Vert\mathbf{A}\right\Vert_F$ denotes Frobenius-norm of the matrix $\mathbf{A}$; 
$\mathbf{A} \succeq \mathbf{0}$ indicates that the matrix $\mathbf{A}$ is positive semi-definite; 
the distribution of a circularly symmetric complex Gaussian (CSCG) random vector with zero mean and 
covariance matrix $\mathbf{A}$ is denoted as $\mathcal{C N} (\mathbf{0}, \mathbf{A})$.

\section{System Model}
 
\subsection{Signal Model}
\begin{figure}[t]
  \centering  \centerline{\includegraphics[width=8.9cm]{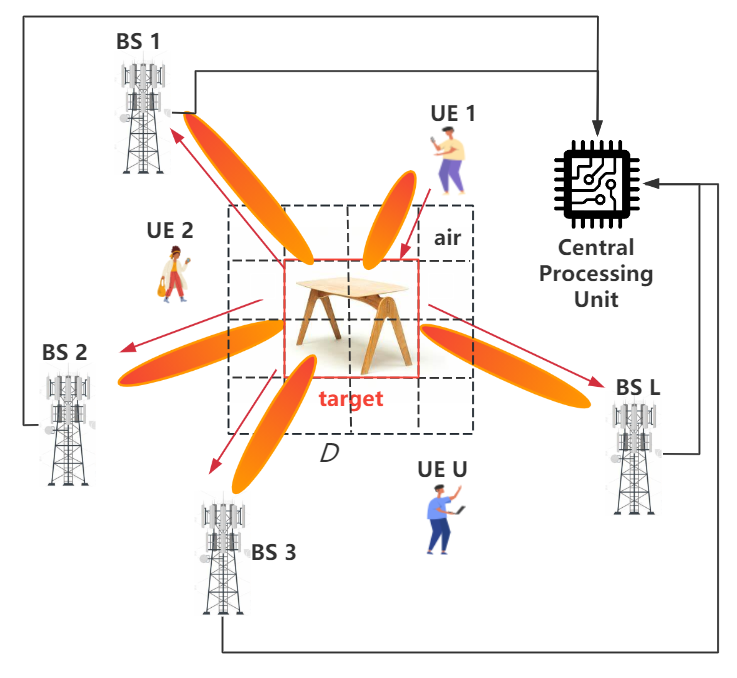}}
  \caption{The system model is composed of $U$ multi-antenna UEs, a target to be sensed in region $D$, and $L$ multi-antenna BSs. 
  } 
  \label{system_model} %
\end{figure}

Consider an uplink OFDM communication scenario with $U$ mobile UEs and $L$ BSs.
The $L$ BSs collaboratively provide EM property sensing service for a single target and are connected using cables by a central processing unit (CPU).
Assume that each UE is equipped with $N_t$ transmitting antennas, and each BS is equipped with $N_r$ receiving antennas. 
The scenario is also known as the multi-static ISAC system.
The UEs all adopt fully digital precoding structure where the number of RF chains $N_{RF}$ is equal to the number of antennas $N_t$. 
To facilitate EM property sensing, we utilize pilot transmissions over $K$ subcarriers. The central frequency of the OFDM signals is represented by $f_c$, and the interval between subcarriers is denoted by $\Delta f$.


In order to avoid interference between multiple UEs, we assume the ISAC system utilizes the time-division duplex (TDD) mode, i.e., only one UE sends the pilot signals at one time. 
Additionally, it is assumed that each UE transmits $I$ pilot symbols on each subcarrier. 


The overall sensing procedure involves several essential steps.
Initially, the target's presence is detected \cite{detection,isac5,isac6,isac7}, which is a critical prerequisite for subsequent analysis. 
Next, ISAC determines the exact location and velocity of the detected target \cite{isac1,isac2,isac3,isac4,mypaper}.
These parameters offer crucial spatial context, facilitating accurate interaction with the target. 
Lastly, the EM property of the target is sensed to gather material-specific information \cite{mypaper4}. 
Hence, we here assume that information such as the approximate region occupied by the target has already been obtained.

Suppose the target scatters the pilot signals from the UE to the BSs as shown in Fig.~\ref{system_model}. 
\color{blue}
Denote the target scattering path channel from the $u$-th UE to the $l$-th BS on the $k$-th subcarrier as $\hat{\mathbf{H}}_{k,l,u} \in \mathbb{C}^{N_r \times  N_t}$ and the direct path channel from the $u$-th UE to the $l$-th BS on the $k$-th subcarrier as $\hat{\mathbf{H}}_{k,l,u}^{d} \in \mathbb{C}^{N_r \times  N_t}$.
Then, the signals received by the $l$-th BS can be formulated as
\begin{align}
\hat{\mathbf{y}}_{k,l,u} = \hat{\mathbf{H}}_{k,l,u} \mathbf{w}_{k,u} x_{k,u} + \hat{\mathbf{H}}_{k,l,u}^{d} \mathbf{w}_{k,u} x_{k,u}
+ \hat{\mathbf{n}}_{k,l}, 
\end{align}
where $\mathbf{w}_{k,u} \in \mathbb{C}^{N_t \times  1}$ is the digital beamforming vector on the $k$-th subcarrier, and 
$\hat{\mathbf{n}}_{k,l} \sim \mathcal{C N}\left(\mathbf{0}_{N_r}, \sigma_{k,l}^2 \mathbf{I}_{N_r}\right)$ is the complex Gaussian noise at the $l$-th BS on the $k$-th subcarrier. 
The pilot symbol $x_{k,u}$ can be set as $1$ without loss of generality. 
Since only the signals scattered by the target carry the information of its EM property, the BS may purposely extract the received signals arriving from the target's direction by beamforming. 
Assume prior knowledge has determined that the target is contained within a certain region to be sensed, represented by $D$.
Denote \( [\theta_l^{min}, \theta_l^{max}] \) as the angular spread of region $D$ when observed from the $l$-th BS. 
Define the steering vector \( \mathbf{a}_k(\theta_l) \in \mathbb{C}^{N_r \times 1} \) for \( \theta_l \) as 
$\mathbf{a}(\theta_l) = \frac{1}{\sqrt{ N_r }} \!\! \begin{bmatrix}
1 , e^{-j \frac{2\pi}{\lambda_k} d_l \sin(\theta_l)} , e^{-j \frac{2\pi}{\lambda_k} 2d_l \sin(\theta_l)} ,\!\! \cdots \!\! , e^{-j \frac{2\pi}{\lambda_k} (N_r-1)d_l \sin(\theta_l)} \!\!\!
\end{bmatrix}^\top\!\!\!\!,$
where \( d_l \) is the inter-antenna spacing.
To extract the signals from the target scattering path, we apply receiver beamforming at the BS with the following beamforming matrix:
\begin{align}
 \mathbf{P}_{k,l} = \frac{\int_{\theta_l^{min}}^{\theta_l^{max}} \mathbf{a}_k(\theta_l) \mathbf{a}_k(\theta_l)^H \mathrm{d} \theta_l}{\theta_l^{max}-{\theta_l^{min}}}  .
\end{align}

If the $u$-th UE, the target, and the $l$-th BS are not on the same line, then there is $\mathbf{P}_{k,l} \hat{\mathbf{H}}_{k,l,u}^d \mathbf{w}_{k,u} \approx \mathbf{0} $, especially when the number of the antennas at the BS is large as is true in the coming 6G era. 
Thus, the received signals after beamforming effectively isolate the direct path of the $u$-th UE and can be simplified to
\begin{align}
\tilde{\mathbf{y}}_{k,l,u} \overset{\Delta}{=} \mathbf{P}_{k,l} \hat{\mathbf{y}}_{k,l,u} =\mathbf{P}_{k,l} \hat{\mathbf{H}}_{k,l,u} \mathbf{w}_{k,u} + \tilde{\mathbf{n}}_{k,l},
\end{align}
where $ \tilde{\mathbf{n}}_{k,l} \overset{\Delta}{=} \mathbf{P}_{k,l} \hat{\mathbf{n}}_{k,l} $.
If the $u$-th UE, the target, and the $l$-th BS happen to be on the same line, then the other UEs can be used to sense the target. 
\color{black}
\begin{remark}
We will demonstrate that $\mathbf{w}_{k,u}$ should be designed in a varying way across the subcarriers and hence a fully digital system is required. 
\end{remark}

Suppose we uniformly discretize region $D$ into $M$ sampling points.
Let the vectors $\mathbf{E}_k^{i,D} \in \mathbb{C}^{M \times 1}$ and $\mathbf{J}_k^{D} \in \mathbb{C}^{M \times 1}$ collect the incident electric field and the equivalent contrast source at all $M$ sampling points within $D$, respectively \cite{lipp,m-born,v-born}. 
Define $\mathbf{H}_{1,k,u} \in \mathbb{C}^{M \times N_t}$ as the channel matrix from the $u$-th
UE to the target that maps $\mathbf{w}_{k,u}$ to $\mathbf{E}_k^{i,D}$, and define $\mathbf{H}_{2,k,l} \in \mathbb{C}^{N_r \times M}$ as the channel matrix from the target to the $l$-th BS path that maps $\mathbf{J}_k^{D}$ to $\tilde{\mathbf{y}}_{k,l}$.
The ISAC channel from the UE through region $D$ to the $l$-th BS can then be described as:
\begin{align}
\hat{\mathbf{H}}_{k,l,u} = \mathbf{H}_{2,k,l} \mathbf{X}_k \mathbf{H}_{1,k,u}, 
\label{X} 
\end{align}  
where $\mathbf{X}_k \in \mathbb{C}^{M \times M}$ maps $\mathbf{E}_k^{i,D}$ to $\mathbf{J}_k^{D}$ and is irrelevant to $u$ and $l$, which 
represents the influence of the existence of the target on the signal transmission. 
Thus, the material-related EM property of the target is implicitly incorporated into the formulation of $\mathbf{X}_k$, which will be leveraged to identify the material of the target. 

\textcolor{blue}{%
The matrices $\mathbf{H}_{1,k,u}$ and $\mathbf{H}_{2,k,l}$ are also known as the radiation operator matrices 
and are solely related to the physical laws of EM propagation \cite{operator,lipp}. 
Since we assume that the positions of the receiver, the region of interest $D$, and the transmitter are known, 
$\mathbf{H}_{1,k,u}$ and $\mathbf{H}_{2,k,l}$ can be calculated by the computational EM methods 
such as method of moments (MoM) \cite{lipp}, \cite{MoM}, finite element method (FEM) \cite{FEM}, and finite-difference time-domain (FDTD) \cite{FDTD}. 
In this paper, we use MoM to compute the radiation operator matrices, which is also widely used in ground penetrating radar \cite{ebihara2007mom,alkhalifeh2016efficient,ebihara2004mom}. 
}%




\subsection{EM Wave Propagation Model}
In this subsection, the closed-form expression of $\mathbf{X}_k$ is derived to unveil the influence of the target scattering process on the signal transmission. 

Sensing the EM property in region $D$ involves reconstructing the contrast function $\chi_k(\mathbf{r}^{\prime})$, which is defined as 
the difference in complex relative permittivity between the target and air. 
Given that the relative permittivity and conductivity of air are approximately equal to $1$ and $0$ S/m, respectively, we can formulate $\chi_k(\mathbf{r}^{\prime})$ as \cite{born,born2,m-born} 
\begin{align}
\chi_k(\mathbf{r}^{\prime}) =  \epsilon_r(\mathbf{r}^{\prime}) + \frac{j \sigma(\mathbf{r}^{\prime})}{\epsilon_0 \omega_k} - 1,
\label{def000}  
\end{align} 
where $\epsilon_r(\mathbf{r}^{\prime})$ denotes the real relative permittivity at point $\mathbf{r}^{\prime}$, $\sigma(\mathbf{r}^{\prime})$ denotes the conductivity at point $\mathbf{r}^{\prime}$, $\omega_k=2\pi f_k$ denotes the angular frequency of EM waves at the $k$-th subcarrier, and  $\epsilon_0$ is the vacuum permittivity. 
We then aim to reconstruct the distributions of the relative permittivity, $\epsilon_r(\mathbf{r}')$, and the conductivity, $\sigma(\mathbf{r}')$. 
The total electric field for the $k$-th subcarrier within region $D$ can be described by the Lippmann-Schwinger (LS) equation \cite{lipp, lipp2}
\begin{align}
E_{k}^{t} ({\mathbf {r}}) = E_{k}^{i} ({\mathbf {r}})+k_{k}^{2}  \int  \limits _{D} {G_{k}({\mathbf {r},\mathbf{r}}') } \chi_{k} ({\mathbf {r}}') E_{k}^{t} ({ \mathbf {r}}')d{\mathbf {r}}' , 
\label{lipp1}
\end{align}
where $ k_k = \frac{\omega_k}{c} $ represents the wavenumber at the $k$-th subcarrier.
Moreover, 
$E_{k}^{t} ({\mathbf {r}})$,       
$E_{k}^{i} ({\mathbf {r}})$, 
and ${G_{k}({\mathbf {r},\mathbf{r}}')}$ denote the total electric field, the incident electric field, and the Green's function, respectively. 



Let the vectors $\mathbf{E}_k^{s,D} \in \mathbb{C}^{M \times 1}$, $\mathbf{E}_k^{t,D} \in \mathbb{C}^{M \times 1}$, and $\boldsymbol{\chi}_k \in \mathbb{C}^{M \times 1}$ collect the scattering electric field, the total electric field, and $\chi_k(\mathbf{r})$ on all $M$ sampling points within $D$, respectively.
Let \(\mathbf{G}_k \in \mathbb{C}^{M \times M}\) represent the matrix obtained by discretizing the integral kernel \(k_k^2 G_k(\mathbf{r} , \mathbf{r}')\) in (\ref{lipp1}) using the method of moments (MoM). In $\mathbf{G}_k$, both \(\mathbf{r}\) and \(\mathbf{r}'\) correspond to the discretized sampling points within region \(D\).



\color{blue}
The total electric field within $D$ can be obtained by adding the incident field and the scattering field, which means $\mathbf{E}_k^{t,D}$ can be written in the discretized form of (\ref{lipp1}) specialized with $\mathbf{r}\in D$ as:
\begin{equation}
\mathbf{E}_k^{t,D}=\mathbf{E}_k^{i,D}+\mathbf{E}_k^{s,D}
=\mathbf{E}_k^{i,D}+ \mathbf{G}_k \textrm{diag}\left ({\boldsymbol{\chi }_{k}}\right) \mathbf{E}_k^{t,D} .
\label{lip3}
\end{equation}

In order to yield a closed-form expression of $\mathbf{E}_k^{t,D}$, 
we reformulate (\ref{lip3}) in the nonlinear form with respect to $\textrm {diag}\left ({\boldsymbol{\chi }_{k}}\right)$ and then obtain $\mathbf{J}_k^{D}$ as:  
\begin{align}
&\mathbf{E}_k^{t,D}  =(\mathbf{I}-\mathbf{G}_k \textrm {diag}\left ({\boldsymbol{\chi }_{k}}\right))^{-1} \mathbf{E}_k^{i,D} , \label{lip3.5} \\
&\mathbf{J}_k^{D}  = \textrm {diag}\left ({\boldsymbol{\chi}_{k}}\right)  \mathbf{E}_k^{t,D} = \textrm {diag}\left ({\boldsymbol{\chi}_{k}}\right) 
(\mathbf{I}-\mathbf{G}_k \textrm {diag}\left ({\boldsymbol{\chi }_{k}}\right))^{-1} \mathbf{E}_k^{i,D} . 
\label{lip3.6}
\end{align}
\color{black}

According to (\ref{X}) and (\ref{lip3.6}), the closed-form expression of $\mathbf{X}_k$ in (\ref{X}) can be written as:
\begin{align}
\mathbf{X}_k =  
\textrm {diag}\left ({\boldsymbol{\chi}_{k}}\right)
\left[\mathbf{I}-\mathbf{G}_k \textrm {diag} \left ({\boldsymbol{\chi }_{k}}\right)\right]^{-1} .  
\label{X_1}
\end{align}
For those sampling points occupied by the air within $D$, the corresponding rows of $\mathbf{X}_k$ are all zeros, and those points are not regarded as EM sources of scattering waves.

\textcolor{blue}{Let $\mathbf{w}_{k,u,i}, i =1, \cdots, I$, denote $\mathbf{w}_{k,u}$ of (1) in the $i$-th time slot.}
Denote $\tilde{\mathbf{W}}_{k,u} =\left[\mathbf{w}_{k,u,1},  \mathbf{w}_{k,u,2}, \cdots, \mathbf{w}_{k,u,I}\right] \in \mathbb{C}^{N_t \times I}$ as the digital transmitter beamforming matrix stacked by time on the $k$-th subcarrier.
Then the overall received pilot signals $\mathbf{Y}_{k,l,u}\in \mathbb{C}^{N_r \times I}$ can be formulated in a compact form as:
\begin{align}
\mathbf{Y}_{k,l,u} &=  \left [\tilde{\mathbf{y}}_{k,l,u,1},\tilde{\mathbf{y}}_{k,l,u,2},\cdots,\tilde{\mathbf{y}}_{k,l,u,I} \right]
\nonumber \\ 
& =   \mathbf{P}_{k,l} \mathbf{H}_{2,k,l}
\textrm {diag}
\left ({\boldsymbol{\chi}_{k}}\right)
[\mathbf{I}-\mathbf{G}_k \textrm {diag}\left (\boldsymbol{\chi }_{k}\right)]^{-1} 
\mathbf{H}_{1,k,u} \tilde{\mathbf{W}}_{k,u}
 \nonumber \\
& + \left[\tilde{\mathbf{n}}_{k,l,1}, \tilde{\mathbf{n}}_{k,l,2}, \cdots, \tilde{\mathbf{n}}_{k,l,I}\right]. 
\label{y1}
\end{align} 

Define the equivalent channel from the $u$-th user to the target as $\tilde{\mathbf{H}}_{1,k,u} = \mathbf{H}_{1,k,u} \tilde{\mathbf{W}}_{k,u}$ and horizontally concatenate the equivalent channels into $\bar{\mathbf{H}}_{1,k} = \left[\tilde{\mathbf{H}}_{1,k,1},\cdots,\tilde{\mathbf{H}}_{1,k,U} \right]$.
Then we can vectorize $\mathbf{Y}_{k,l,u}$ $(u=1,\cdots,U)$ to $\mathbf{y}_{k,l} \in \mathbb{C}^{U I N_r \times  1}$ as:
\begin{align}
& \mathbf{y}_{k,l} = \mathrm{vec}\left([\mathbf{Y}_{k,l,1},\cdots,\mathbf{Y}_{k,l,U}]\right) \nonumber \\ 
& \overset{(a)}{=}  \!\!
\left[\left(\bar{\mathbf{H}}_{1,k}^\top
\left[\mathbf{I}-\mathbf{G}_k \textrm {diag}\left ({\boldsymbol{\chi }_{k}}\right)\right]^{-\top}\right) \circ  (\mathbf{P}_{k,l} \mathbf{H}_{2,k,l}) \right] \boldsymbol{\chi }_{k} + 
\mathbf{n}_{k,l} ,
\label{n1}
\end{align} 
where $\overset{(a)}{=}$ in (\ref{n1}) comes from the equality: $\mathrm{vec}\left(\mathbf{A} \textrm{diag}(\mathbf{b}) \mathbf{C}\right)=\left(\mathbf{C}^T \circ \mathbf{A}\right) \mathbf{b}$, and $\mathbf{n}_{k,l}=\left[\tilde{\mathbf{n}}_{k,l,1}^\top, \tilde{\mathbf{n}}_{k,l,2}^\top, \cdots, \tilde{\mathbf{n}}_{k,l,I}^\top\right]^\top$ denotes the overall vectorized noise.

In order to derive a quasi-linear sensing model, we further define the sensing matrix $\mathbf{D}_{k,l} \in \mathbb{C}^{U I N_r \times M  }$ as: 
\begin{align}
\mathbf{D}_{k,l} &\overset{\Delta}{=} \left\{\bar{\mathbf{H}}_{1,k}^\top
\left[\mathbf{I}-\mathbf{G}_k \textrm {diag}\left ({\boldsymbol{\chi }_{k}}\right)\right]^{-\top}\right\} \circ (\mathbf{P}_{k,l} \mathbf{H}_{2,k,l}) .   
\label{sense1}
\end{align}
Then the sensing equation with respect to $\boldsymbol{\chi}_k$ can be formally formulated as: 
\begin{align}
\mathbf{y}_{k,l} & = \mathbf{D}_{k,l} \boldsymbol{\chi}_k
+\mathbf{n}_{k,l} .
\label{sensing}
\end{align}

\color{blue}
\begin{remark} 
Note that the existing multistatic radar system differs significantly from the system presented in this paper.
In radar systems, signals are both transmitted and received by radars, whereas in this paper, mobile UEs transmit the signals and the fixed BSs receive the signals.  
UEs and BSs vary a lot in their numbers, sizes of antenna arrays, and distances to the target. 
A lot of UEs surrounding the target can provide more views than only using the BSs. 
The transmit pilot design is also tailored for UEs with relatively small size of antenna arrays and with relatively close distance to the target. 
Additionally, traditional multistatic radar imaging systems focus solely on target shape, without considering the distribution of dielectric constants and conductivity. The system most similar to that in this paper is the multi-antenna EM imaging technology \cite{lipp,m-born,lipp2}. However, in the multi-antenna EM imaging technology, densely arranged antennas are placed surrounding the target, with each antenna sequentially transmitting signals, and all received signals are directly uploaded for centralized processing. Compared to multi-antenna EM imaging, the key differences in this paper are: (1) the signals received by the antennas are not directly uploaded to the CPU but instead, the sensing results are fused to reduce communication overhead; (2) not every antenna transmits signals sequentially, but rather, beamforming is applied at the transmitter to reduce the mutual coherence of the pilot patterns, thereby improving sensing accuracy.
\end{remark} 
\color{black}

\section{EM Property Sensing} 
In this section, we propose an EM property sensing method by fusing received pilot signals of all $K$ subcarriers from the same BS.
Denote the central angular frequency as $\omega_c = 2\pi f_c$ and denote the corresponding contrast vector as $\boldsymbol{\chi}_c$.
Denote $\mathbf{z}_{k,l} = \left[ \Re\left(\mathbf{y}_{k,l}\right)^\top , \Im\left(\mathbf{y}_{k,l}\right)^\top
\right]^\top$ and the EM property vector as $\mathbf{s}= \left[(\boldsymbol{\varepsilon}_r-\mathbf{1}_{M})^\top,(\frac{1}{\omega_c \boldsymbol{\varepsilon}_0}\boldsymbol{\sigma})^\top \right]^\top$. We then define the real-valued weighted sensing matrix as:
\begin{align}
\mathbf{E}_{k,l} = \left[\!\!\begin{array}{cc}
\Re\left(\mathbf{D}_{k,l}\right) & - \frac{\omega_c}{\omega_k}\Im\left(\mathbf{D}_{k,l}\right) \\
\Im\left(\mathbf{D}_{k,l}\right) & \frac{\omega_c}{\omega_k}\Re\left(\mathbf{D}_{k,l}\right)
\end{array}\!\!\right]\! .
\label{ee}
\end{align}
Separating the real and imaginary part of equation (\ref{sensing}) and extracting frequency-related components in $\boldsymbol{\chi}_k$, we can derive the real-valued sensing equation as:
\begin{align}
\mathbf{z}_{k,l}=\mathbf{E}_{k,l} \mathbf{s} + \left[\begin{array}{l}
\Re\left(\mathbf{n}_{k,l}\right)^\top,
\Im\left(\mathbf{n}_{k,l}\right)^\top
\end{array}\right]^\top.
\label{Ek2}
\end{align}
Define the general vector of measurement $\tilde{\mathbf{z}}_l \in \mathbb{R}^{ 2 U I N_r K \times 1}$ and the general sensing matrix $\tilde{\mathbf{E}}_l \in \mathbb{R}^{ 2U I N_r K \times 2M}$ as:
\begin{align}
\tilde{\mathbf{z}}_l &= \left[\mathbf{z}_{1,l}^\top, \cdots, \mathbf{z}_{K,l}^\top \right]^\top ,  \\  
\tilde{\mathbf{E}}_l &= \left[\mathbf{E}_{1,l}^\top, \cdots, \mathbf{E}_{K,l}^\top \right]^\top.  \label{assemble}
\end{align} 
\textcolor{blue}{Real-world objects with irregular shapes often exhibit sparsity when placed within artificially defined and regularly shaped regions.\footnote{\textcolor{blue}{Since the BSs are unaware of the target's accurate size, the BSs need to choose a large enough $D$ to guarantee that the target with the possibly largest size can be contained by $D$.
Besides, the region $D$ is in squared shape while the target's shape is often irregular. 
Moreover, if there is no high requirement for the detailed shape of the reconstructed target, the range of $D$ can be expanded to include the surrounding air, thereby maintaining the sparse structure on $\mathbf{s}$ at the cost of reducing the resolution of the target's shape.
Thus, some part of the region $D$ is filled with air under most circumstances, which makes the sparse structure on $\mathbf{s}$ holds true.
}}}
Given that the region $D$ is mainly filled with air and the target occupies a relatively small part within $D$, most of the elements of $\mathbf{s}$ are $0$ corresponding to the EM property of the air. 
\textcolor{blue}{%
Additionally, $(\boldsymbol{\varepsilon}_r-\mathbf{1}_{M})$ and $\frac{1}{\omega_c \boldsymbol{\varepsilon}_0}\boldsymbol{\sigma}$ 
share the same support set and the same sparse structure. 
When a target occupies $\tilde{K}$ sampling points in the region $D$, both  $(\boldsymbol{\varepsilon}_r-\mathbf{1})$ and $\frac{1}{\omega_c \boldsymbol{\varepsilon}_0}\boldsymbol{\sigma}$ will have $\tilde{K}$ non-zero elements in the same positions. Consequently, the vector $\mathbf{s}$, which combines $(\boldsymbol{\varepsilon}_r-\mathbf{1}_{M})$ and $\frac{1}{\omega_c \boldsymbol{\varepsilon}_0}\boldsymbol{\sigma}$, will be jointly $\tilde{K}$-sparse.
}%
Utilizing the compressive sensing techniques, $\mathbf{s}$ can be reconstructed 
by the $l$-th BS individually 
as the solution to the following regularized mixed $\ell(1,2)$-norm minimization problem: 
\begin{subequations}
\begin{align}
\min_\mathbf{s} \  & h_l (\mathbf{s}) \overset{\Delta}{=} \frac{1}{2} \| \tilde{\mathbf{z}}_l- \tilde{\mathbf{E}}_l \mathbf{s}\|_2^2 + \lambda_l  \|\mathbf{s}\|_{1,2} \\
\text { s.t. }
 & \mathbf{s} \geq \mathbf{0}_{2M} ,   
\end{align}
\label{com}%
\end{subequations}
where $\lambda_l$ represents a predetermined positive regularization parameter that should be chosen appropriately. 
The constraint $\mathbf{s} \geq \mathbf{0}_{2M}$ indicates all elements of $\mathbf{s}$ are nonnegative and holds true because the conductivity is nonnegative and the relative permittivity is larger than or equal to $1$. 
Moreover, $\|\mathbf{s}\|_{1,2}$ is the mixed $\ell(1,2)$-norm defined as:
$
\| \mathbf{s}\|_{1,2}
\overset{\Delta}{=}
{\sum _{m=1}^{M}
\sqrt{s_m^2+s_{m+M}^2}} .    
$



In particular, (\ref{com}) 
can be considered as the maximum a posteriori (MAP) estimation problem,
where the prior probability density function of $\mathbf{s}$ is $\lambda_l e^{-\lambda_l \|\mathbf{s}\|_{1,2}}$. 
Given the prior information, (\ref{com}) focuses on harnessing the joint sparsity of $\mathbf{s}$ to enhance the sensing performance. 

\section{Feature-Level Multi-BS Data Fusion} 
In this section, we design the multi-BS data fusion algorithm to improve the accuracy of RPCD reconstructions compared to the single-BS sensing results.
In order to reduce the transmission overhead between BSs and the CPU, we perform the feature-level fusion based on the EM property of the collected data. 
\textcolor{blue}{%
In the feature-level fusion, only the sensing results of RPCD are transmitted between the BSs and the CPU, rather than the raw data, i.e., the received signals. 
}%
Thus, the feature-level fusion reduces the amount of information exchanged while preserving the quality and utility of the data for further processing and analysis.

\subsection{Multi-Agent Consensus Equilibrium}
\textcolor{blue}{%
In the multi-BS network, each BS collects information from an individual perspective and can be considered as an agent for the shared EM property sensing task.
Since only the sensing results of RPCD are transmitted between the BSs and the CPU, the CPU is unaware of the sensing equations (\ref{com}).
The BSs need to solve (\ref{com}) distributedly and the CPU needs to achieve consensus among the sensing results of different BSs.
Therefore, we utilize the multi-agent consensus equilibrium (MACE) framework to harmonize the sensing results of multiple BSs \cite{mace1}, because it efficiently addresses the challenge of fusing the distributed data with low transmission overhead.
}%

In order to solve problem (\ref{com}) from signals received at all BSs, we introduce the general MAP cost function $h(\mathbf{s})$ as the sum of $h_l(\mathbf{s})$, aiming to minimize these individual cost functions collaboratively, i.e., 
\begin{subequations}
\begin{align}
\min_\mathbf{ s } \  &  h(\mathbf{s}) \overset{\Delta}{=} \sum_{l=1}^L \zeta_l h_l(\mathbf{s}) \\ 
\text { s.t. } 
 & \mathbf{s} \geq \mathbf{0}_{2M} ,  
\end{align} 
\label{gen0} 
\end{subequations}
where $\zeta_l$ is the normalized weight coefficient and satisfies $\sum_{l=1}^L \zeta_l = 1 $.


In order to understand the relationship between (\ref{gen0}) and MACE, 
we define the $l$-th agent $F_l(\cdot)$ as the proximal mapping of $\sigma^2 \zeta_l h_l(\cdot)$, i.e., a mapping in the following form:   
\begin{align} 
F_l(\mathbf{s}) \!& =\! \textbf{prox}_{ \sigma^2 \zeta_l h_l(\cdot) } \left(\mathbf{s}\right) 
\! = \! \underset{\mathbf{v}, \mathbf{v} \ge \mathbf{0}_{2M}}{\arg\min}\left\{\frac{\|\mathbf{v}-\mathbf{s}\|_2^2}{2 } + \sigma^2 \zeta_l h_l(\mathbf{v})\right\} \nonumber \\
& = \underset{\mathbf{v}, \mathbf{v} \ge \mathbf{0}_{2M}}{\arg\min}\left\{\frac{\|\mathbf{v}-\mathbf{s}\|_2^2}{2 \sigma^2 \zeta_l} + \frac{1}{2} \| \tilde{\mathbf{z}}_l-\tilde{\mathbf{E}}_l \mathbf{v}\|_2^2 + \lambda_l  \|\mathbf{v}\|_{1,2}\right\} , 
\label{prox}%
\end{align}
where $\sigma$ is a hyperparameter shared by all BSs that controls the convergence speed. 

The core concept of MACE is that each BS applies an abstract force, represented as an offset, to a consensus solution.
MACE is achieved when the collective forces exerted by all the BSs balance each other out, which is called equilibrium in physics \cite{mace2}. 
Specifically, the consensus solution of all BSs, defined as $\mathbf{s}^* \in \mathbb{R}^{ 2 M \times 1}$, is found by solving the following set of MACE equations:
\begin{align}
F_l\left(\mathbf{s}^*+\mathbf{u}_l^*\right) & =\mathbf{s}^* , \quad l = 1, \cdots, L, \label{def0} \\ 
\sum_{l=1}^L \mathbf{u}_l^* & = \mathbf{0}_{2M} ,
\label{sum0}
\end{align}
where $\mathbf{u}_l^* \in \mathbb{R}^{2 M \times 1}$ represents the abstract force applied by the $l$-th BS that should satisfy (\ref{sum0}) in order to achieve equilibrium among all BSs. 

\color{blue}
\begin{theorem}
The solution $\mathbf{s}^*$ to the MACE equations (\ref{def0}) and (\ref{sum0})
is exactly the solution to the MAP reconstruction problem  (\ref{gen0}).
\end{theorem}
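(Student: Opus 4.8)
The plan is to prove the equivalence by matching the first-order optimality conditions of the two problems. First I would fold the nonnegativity constraint into each objective by defining the extended function $\bar{h}_l(\mathbf{v}) \overset{\Delta}{=} h_l(\mathbf{v}) + \iota_{+}(\mathbf{v})$, where $\iota_{+}$ is the indicator function of the nonnegative orthant $\{\mathbf{v} : \mathbf{v} \geq \mathbf{0}_{2M}\}$. With this notation the agent map (\ref{prox}) is exactly $F_l = \textbf{prox}_{\sigma^2 \zeta_l \bar{h}_l}$, since the constraint $\mathbf{v} \geq \mathbf{0}_{2M}$ in the argmin is encoded by $\iota_{+}$. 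Moreover, because $\sum_{l=1}^L \zeta_l = 1$ the shared indicators combine into a single one, so that $\bar{h} \overset{\Delta}{=} \sum_{l=1}^L \zeta_l \bar{h}_l = h + \iota_{+}$; hence the unconstrained minimizer of $\bar{h}$ is precisely the constrained solution sought in (\ref{gen0}).

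The key tool is the standard proximal characterization: for a proper, closed, convex $g$, one has $\mathbf{p} = \textbf{prox}_{g}(\mathbf{b})$ if and only if $\mathbf{b} - \mathbf{p} \in \partial g(\mathbf{p})$. Applying this to each MACE equation (\ref{def0}) with $g = \sigma^2 \zeta_l \bar{h}_l$, $\mathbf{b} = \mathbf{s}^* + \mathbf{u}_l^*$, and $\mathbf{p} = \mathbf{s}^*$ yields $\mathbf{u}_l^* \in \sigma^2 \zeta_l \, \partial \bar{h}_l(\mathbf{s}^*)$ for every $l$. Summing over $l$ and invoking the equilibrium constraint (\ref{sum0}) gives $\mathbf{0} = \sum_{l=1}^L \mathbf{u}_l^* \in \sigma^2 \sum_{l=1}^L \zeta_l \, \partial \bar{h}_l(\mathbf{s}^*)$. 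I would then invoke the subdifferential sum rule $\sum_{l=1}^L \zeta_l \, \partial \bar{h}_l(\mathbf{s}^*) = \partial\!\left(\sum_{l=1}^L \zeta_l \bar{h}_l\right)\!(\mathbf{s}^*) = \partial \bar{h}(\mathbf{s}^*)$, which collapses the relation to $\mathbf{0} \in \partial \bar{h}(\mathbf{s}^*)$. Since $\bar{h}$ is convex, this is the necessary and sufficient condition for $\mathbf{s}^*$ to globally minimize (\ref{gen0}), establishing the forward direction.

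For the converse I would run the same chain backward. If $\mathbf{s}^*$ minimizes (\ref{gen0}) then $\mathbf{0} \in \partial \bar{h}(\mathbf{s}^*) = \sum_{l=1}^L \zeta_l \, \partial \bar{h}_l(\mathbf{s}^*)$, so there exist subgradients $\mathbf{w}_l \in \partial \bar{h}_l(\mathbf{s}^*)$ with $\sum_{l=1}^L \zeta_l \mathbf{w}_l = \mathbf{0}_{2M}$. Setting $\mathbf{u}_l^* \overset{\Delta}{=} \sigma^2 \zeta_l \mathbf{w}_l$ gives $\mathbf{u}_l^* \in \sigma^2 \zeta_l \, \partial \bar{h}_l(\mathbf{s}^*)$, which by the prox characterization is exactly (\ref{def0}), while $\sum_{l=1}^L \mathbf{u}_l^* = \sigma^2 \sum_{l=1}^L \zeta_l \mathbf{w}_l = \mathbf{0}_{2M}$ is (\ref{sum0}). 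Thus the two solution sets coincide.

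The main obstacle I anticipate is justifying the subdifferential sum rule, which requires a constraint qualification rather than holding unconditionally. This is the step I would treat carefully: each $\bar{h}_l$ has domain equal to the nonnegative orthant (the quadratic data term and the $\ell(1,2)$ term are finite everywhere, so only $\iota_{+}$ restricts the domain), and therefore the relative interiors of the domains share the strictly positive orthant as a common point. Rockafellar's condition for the exact additivity of subdifferentials is then met, and the sum rule applies. The remaining ingredients—the prox fixed-point identity and the sufficiency of first-order stationarity under convexity—are routine once the constraint has been absorbed into $\iota_{+}$.
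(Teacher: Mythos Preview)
Your proposal is correct and follows essentially the same route as the paper: apply the proximal fixed-point characterization to each MACE equation, sum the resulting subgradient inclusions, and invoke convexity to conclude global optimality. Your version is in fact more complete than the paper's: the paper writes the argument directly in terms of $\partial h_l$ without explicitly absorbing the nonnegativity constraint into an indicator, does not discuss the subdifferential sum rule, and proves only the forward direction (MACE $\Rightarrow$ MAP), whereas you also supply the converse.
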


\begin{proof}
According to (\ref{prox}) and (\ref{def0}) as well as the convexity of $h_l(\cdot)$, there is 
\begin{align} 
\left[ \mathbf{s}^*-\left(\mathbf{s}^*+\mathbf{u}_l^*\right) \right] + \sigma^2 \zeta_l \partial h_l\left(\mathbf{s}^*\right)  \ni \mathbf{0}_{2M}, \quad l = 1, \cdots, L, 
\label{stationary}
\end{align}
where $\partial h_l(\cdot)$ is the subgradient of $h_l(\cdot)$. Thus, by summing (\ref{stationary}) over $l$ and applying (\ref{sum0}), we obtain 
\begin{align} 
\partial h \left(\mathbf{s}^*\right) \ni \mathbf{0}_{2M}. 
\label{optim}
\end{align}
Note that (\ref{optim}) always holds irrespective of the choice of $\sigma$, which only influences the time to reach consensus. 
Considering the convexity of $h(\cdot)$, we have proven that the stationary point  $\mathbf{s}^*$ is the global minimum of the general MAP cost function $h(\cdot)$ in (\ref{gen0}). 
\end{proof}
\color{black}

The main calculation step in the MACE equations is to compute $F_l(\cdot)$ in (\ref{prox}).
In order to solve (\ref{prox}) efficiently, we define 
the indicator function corresponding to the constraint $\mathbf{v} \ge \mathbf{0}_{2M}$ as
\begin{equation}
\mathbb{I}(\mathbf{\mathbf{v}})= \begin{cases}0, & \mathbf{v} \ge \mathbf{0}_{2M} \\ +\infty, &  \text{otherwise} \end{cases} . 
\label{indicator}
\end{equation}
Then the constraint $\mathbf{v} \ge \mathbf{0}_{2M}$ can be replaced by incorporating the indicator function (\ref{indicator}) into the objective function. 
We can then solve (\ref{prox}) using the alternating direction method of multipliers (ADMM). With an auxiliary variable $\mathbf{a} \in \mathbb{R}^{2 M\times 1}$, (\ref{prox}) can be equivalently transformed into 
\begin{subequations}
\begin{align}
\min_{\mathbf{v}, \mathbf{a}} \  &  \frac{1}{2} \| \tilde{\mathbf{z}}_l-\tilde{\mathbf{E}}_l \mathbf{v}\|_2^2 + \mathbb{I}(\mathbf{\mathbf{a}}) + \lambda_l  \|\mathbf{a}\|_{1,2} +  \frac{1}{2 \sigma^2 \zeta_l} \|\mathbf{a}-\mathbf{s}\|_2^2 \label{ssum}\\ 
\text { s.t. }
 & \mathbf{v} = \mathbf{a}.
\end{align}
\label{admm}%
\end{subequations}
The reason for the above transformation is that the first term in (\ref{ssum}) can be regarded as the fidelity term, while the remaining three terms in (\ref{ssum}) can be regarded as the regularization terms given by prior knowledge.
Since both the fidelity term and the regularization terms are convex functions, ADMM guarantees the convergence to the global minimum of the objective function (\ref{ssum}) \cite{boyd2004convex}.
To equivalently convert (\ref{admm}) into an optimization problem without the constraint, the augmented Lagrangian function $L_l(\mathbf{v}, \mathbf{a}, \mathbf{b})$ 
associated with the optimization problem 
(\ref{admm}) is defined as
\begin{align}
L_l(\mathbf{v}, \mathbf{a}, \mathbf{b}) & =\frac{1}{2} \| \tilde{\mathbf{z}}_l-\tilde{\mathbf{E}}_l \mathbf{v}\|_2^2 + \mathbb{I}(\mathbf{\mathbf{a}}) + \lambda_l  \|\mathbf{a}\|_{1,2} +  \frac{\|\mathbf{a}-\mathbf{s}\|_2^2}{2 \sigma^2 \zeta_l} 
\nonumber\\
&-\textbf{b}^{\top}(\mathbf{a}-\mathbf{v})+\frac{1}{2 \eta_l}\|\mathbf{a}-\mathbf{v} \|_2^2 ,
\end{align}
where $\mathbf{b} \in \mathbb{R}^{2 M\times 1} $ is the Lagrangian multiplier and $\eta_l$ is the penalty factor.
Given the initial vector set $\left\{\mathbf{v}^{(0)}, \mathbf{a}^{(0)},\mathbf{b}^{(0)} \right\}$ and until a certain stopping criterion is satisfied, ADMM generates the sequence $(\mathbf{v}^{(n)}, \mathbf{a}^{(n)}, \mathbf{b}^{(n)})$ recursively as follows
\begin{align}
&\mathbf{v}^{(n+1)}  = \underset{\mathbf{v}}{\arg \min } \, L_l\left(\mathbf{v}, \mathbf{a}^{(n)}, \mathbf{b}^{(n)}\right) \nonumber\\
&=\left(\tilde{\mathbf{E}}_l^{\top} \tilde{\mathbf{E}}_l + \frac{1}{\eta_l} \mathbf{I}\right)^{-1}\left[\tilde{\mathbf{E}}_l^{\top} \tilde{\mathbf{z}}_l + \frac{1}{\eta_l}\left(\mathbf{a}^{(n)} - \eta_l \mathbf{b}^{(n)} \right)\right],  \label{viter} \\
&\mathbf{a}^{(n+1)}  = \underset{\mathbf{a}}{\arg \min } \, L_l \left(\mathbf{v}^{(n+1)}, \mathbf{a}, \mathbf{b}^{(n)}\right) \!= \! \underset{\mathbf{a}}{\arg \min } \! \left\{ \mathbb{I}(\mathbf{\mathbf{a}}) \right. \! \nonumber\\
& + \left. \!\lambda_l  \|\mathbf{a}\|_{1,2} \!+\!  \frac{\|\mathbf{a}-\mathbf{s}\|_2^2}{2 \sigma^2 \zeta_l} \!+\!\frac{\|\mathbf{a}- \eta_l \mathbf{b}^{(n)}-\mathbf{v}^{(n+1)} \|_2^2}{2 \eta_l} \! \right\} \!\! , \label{aaaaa} \\
&\mathbf{b}^{(n+1)}  = \mathbf{b}^{(n)}-\frac{1}{\eta_l} \left(\mathbf{v}^{(n+1)}-\mathbf{a}^{(n+1)}\right) . \label{bbbbb}
\end{align}
Since (\ref{aaaaa}) is relatively difficult to solve, we then elaborate the details of the update of $\mathbf{a}^{(n+1)}$. 
Let $a^{(n+1)}_m$ denote the $m$-th element of $\mathbf{a}^{(n+1)}$.
Note that (\ref{aaaaa}) can be decomposed into $M$ uncorrelated subproblems, with the $m$-th subproblem being formulated as
\begin{align}
 &(a^{(n+1)}_m,a^{(n+1)}_{m+M}) \!\! = \!\! \underset{a_{m},a_{m+M}}{\arg \min } \!\! \left \{ \mathbb{I}(a_m,a_{m+M}) \! + \! \lambda_l \sqrt{a_{m}^2 + a_{m+M}^2} \right. \nonumber\\
\! & \! + \!(\frac{1}{2 \eta_l} \!+ \!\frac{1}{2 \sigma^2 \zeta_l}\!)
 \!\! \left[\!a_{m} - \frac{\sigma^2 \zeta_l(b^{(n)}_m + v^{(n+1)}_m) + \eta_l s_m}{\sigma^2 \zeta_l + \eta_l} \right]^2 \nonumber\\
\! & \left. +  \!(\frac{1}{2 \eta_l} \!+\! \frac{1}{2 \sigma^2 \zeta_l}\!)
 \!\! \left [\!a_{m+M} \! - \! \frac{\sigma^2 \zeta_l(b^{(n)}_{m+M}\!+\!v^{(n+1)}_{m+M}) \!+\! \eta_l s_{m+M}}{\sigma^2 \zeta_l + \eta_l} \! \right]^2 \!\! \right\} \!.  \label{longg}
\end{align}
Define $\tilde{\mathbf{a}}_m = \left[a_{m},a_{m+M}\right]^\top$ and $g(\tilde{\mathbf{a}}_m) = \mathbb{I}(\tilde{\mathbf{a}}_m) + \lambda_l \| \tilde{\mathbf{a}}_m \|_2$.  
Note that (\ref{longg}) can be interpreted as finding the proximal mapping of $ \frac{2 \eta_l \sigma^2 \zeta_l}{\eta_l + \sigma^2 \zeta_l}g(\cdot)$. 
Define $\mathbf{c}^{(n)} = \left[ \frac{\sigma^2 \zeta_l(b^{(n)}_m + v^{(n+1)}_m) + \eta_l s_m}{\sigma^2 \zeta_l + \eta_l} , \frac{\sigma^2 \zeta_l(b^{(n)}_{m+M}+v^{(n+1)}_{m+M}) + \eta_l s_{m+M}}{\sigma^2 \zeta_l + \eta_l}\right]^\top$. 
The solution of (\ref{longg}) is then given by 
\begin{align}
 a^{(n+1)}_m \! &= \!\max \!\left \{0, \left[1 \!- \!\frac{2\eta_l\sigma^2 \zeta_l\lambda_l}{ (\sigma^2 \zeta_l + \eta_l) \max\{\|\mathbf{c}^{(n)}\|_2, \frac{2\eta_l\sigma^2 \zeta_l\lambda_l}{\eta_l+\sigma^2 \zeta_l}\} } \! \right]  \right. \nonumber\\
 &\left. \times \frac{\sigma^2 \zeta_l(b^{(n)}_m + v^{(n+1)}_m) + \eta_l s_m}{\sigma^2 \zeta_l + \eta_l} \right \}, \label{final00} \\ 
  a^{(n+1)}_{m+M} \! &= \! \max \!\left \{0, \left[1 \!-\!\frac{2\eta_l\sigma^2 \zeta_l\lambda_l}{ (\sigma^2 \zeta_l + \eta_l) \max\{\|\mathbf{c}^{(n)}\|_2, \frac{2\eta_l\sigma^2 \zeta_l\lambda_l}{\eta_l+\sigma^2 \zeta_l}\} } \! \right]  \right. \nonumber\\
 &\left. \times \frac{\sigma^2 \zeta_l(b^{(n)}_{m+M} + v^{(n+1)}_{m+M}) + \eta_l s_{m+M}}{\sigma^2 \zeta_l + \eta_l} \right \}. \label{final}
\end{align} 
By solving (\ref{longg}) for $m = 1,\cdots,M$, we can obtain the solution to (\ref{aaaaa}) whose elements are listed in  (\ref{final00}) and (\ref{final}). 
Furthermore, by repeating (\ref{viter})-(\ref{bbbbb}) until the convergence of ADMM, the solution to (\ref{prox}) is given by the final $\mathbf{v}^{(n)}$ in (\ref{viter}).  

\textcolor{blue}{%
\begin{remark}
The computational complexity of calculating $ F_l(\cdot) $ in (\ref{prox}) is dominated by the update of $\mathbf{v}^{(n)}$ in ADMM.
Suppose the maximum iteration number is $N_{ADMM}$. Then the complexity of calculating $ F_l(\cdot) $ is $ O (N_{ADMM} (U I N_r K M ^2 + M ^3) )$ in the worst case.
However, if we solve (\ref{prox}) through
CVX using the default interior point method, then the order of computational complexity is 
$O ( M^{3.5} \log(1/\rho))$ according to \cite{cvx}, where $\rho$ denotes the target tolerance. 
Since $M$ is typically much larger than $U, I, N_r$, and $K$, the proposed ADMM approach enjoys lower computational complexity than that with CVX. 
\end{remark}
}%

\subsection{Finding MACE by Solving Fixed-Point Problem}
Since the MACE equations (\ref{def0}) and (\ref{sum0}) are difficult to solve straightforwardly, we will further elucidate the solution process.

Note that even with the same input, the output of $ F_l(\cdot) $ may be different across BSs due to the difference in $\tilde{\mathbf{E}}_l$. 
Thus, let $\mathbf{s}_l$ denote the RPCD vector reconstructed by the $l$-th BS. 
We horizontally stack the vectors $\mathbf{s}_l \, ,l=1,\cdots,L,$ to form the overall MACE state as
\begin{align} 
\mathbf{S} = [\mathbf{s}_1, \cdots, \mathbf{s}_L].
\end{align}
Define the unified proximal operator by horizontally stacking the proximal mappings $F_l(\cdot) \, ,l=1,\cdots,L,$ as 
\begin{align} 
\mathbf{F}(\mathbf{S}) = [F_1(\mathbf{s}_1), \cdots, F_L(\mathbf{s}_L)].
\label{defF}
\end{align}

Since each column of $\mathbf{F}(\mathbf{S})$ is individually calculated by a certain BS, $\mathbf{F}(\mathbf{S})$ contains multiple and possibly inconsistent reconstructions. 
In order to yield a single RPCD reconstruction, we define the averaging operator as 
\begin{align} 
\mathbf{G}(\mathbf{S})=\frac{1}{L}\mathbf{S} \mathbf{1}_{L} \mathbf{1}_{L} ^\top = \left[\bar{x}(\mathbf{S}), \cdots, \bar{x}(\mathbf{S}) \right],
\end{align}
where $\bar{x}(\mathbf{S})=\frac{1}{L} \mathbf{S} \mathbf{1}_{L} 
=\frac{1}{L} \sum_{l=1}^{L} \mathbf{s}_l$.
The operator $\mathbf{G}(\mathbf{S})$ computes the average of the components in $\mathbf{S}$, and then returns the matrix formed by duplicating $\bar{x}(\mathbf{S})$ $L$ times. Note that the averaging operator $\mathbf{G}(\mathbf{S})$ is idempotent, i.e., $\mathbf{G}(\mathbf{G}(\mathbf{S}))=\mathbf{G}(\mathbf{S})$. 
Since $\mathbf{s}_l = \mathbf{s}^* + \mathbf{u}_l^*$ when MACE is achieved,
the MACE equations, i.e., (\ref{def0}) and (\ref{sum0}), can be reformulated compactly as 
\begin{align} 
\mathbf{F}\left(\mathbf{S}^*\right)=\mathbf{G}\left(\mathbf{S}^*\right),
\label{mace}
\end{align}
where $\mathbf{S}^*$ is the solution of (\ref{mace}).
Equation (\ref{mace}) can be interpreted as finding a consensus equilibrium among all BSs. 
The final MACE reconstruction is then given by $\mathbf{s}^*=\bar{x}\left(\mathbf{S}^*\right)$, and the columns of  $\mathbf{F}\left(\mathbf{S}^*\right)-\mathbf{S}^*$ are $\mathbf{u}_l^*$ in (\ref{sum0}), $l = 1,\cdots,L,$ whose summation is $\mathbf{0}_{2M}$. 


The MACE equation (\ref{mace}) can be solved in a convenient way by converting the equation 
to a fixed-point problem, which can be efficiently and effectively solved by using the Mann iteration method \cite{mann}.
Define the auxiliary variable $\mathbf{Q}^*$ as 
\begin{align}
\mathbf{Q}^*=(2 \mathbf{G}-\mathbf{I}) \mathbf{S}^*. 
\label{qd}
\end{align}
Since $\mathbf{G}$ is idempotent, there is $(2 \mathbf{G}-\mathbf{I})^2 = \mathbf{I}$, i.e., $(2 \mathbf{G}-\mathbf{I})$ is self-inverse. 
From (\ref{qd}), we have $\mathbf{S}^*=(2 \mathbf{G}-\mathbf{I}) \mathbf{Q}^*$. Moreover, with (\ref{mace}), there is
\begin{align}
(2 \mathbf{F}-\mathbf{I}) \mathbf{S}^* = (2 \mathbf{G}-\mathbf{I}) \mathbf{S}^*. 
\label{qd2}
\end{align}
Substituting $\mathbf{S}^*=(2 \mathbf{G}-\mathbf{I}) \mathbf{Q}^*$ into (\ref{qd2}) and using the self-inverse property of $2 \mathbf{G}-\mathbf{I}$, we obtain 
\begin{align}
(2\mathbf{F} - \mathbf{I})(2\mathbf{G} - \mathbf{I}) \mathbf{Q}^* = \mathbf{Q}^*. 
\label{qd3}
\end{align}
Therefore, the solution to the MACE equation (\ref{mace}) is exactly the fixed point of the composition operator $\mathbf{T} = (2\mathbf{F} -
\mathbf{I})(2\mathbf{G} - \mathbf{I})$ given by 
\begin{align}
\mathbf{T} (\mathbf{Q}^*)=\mathbf{Q}^* . 
\end{align}
We can calculate the fixed-point $\mathbf{Q}^*$ using the Mann iterations \cite{mann}. 
In this approach, we start with an arbitrary initial guess $\mathbf{Q}^{(0)}$ and employ the following iterations:
\begin{align}
\mathbf{Q}^{(k+1)}=\rho \mathbf{T} (\mathbf{Q}^{(k)})+(1-\rho) \mathbf{Q}^{(k)}, k \geq 0,
\label{mann}%
\end{align} 
which guarantees the convergence to the fixed-point $\mathbf{Q}^*$ under the condition that the relaxation parameter $\rho$ is within the interval $(0,1)$ and the operator $\mathbf{T}$ is non-expansive \cite{mann}. 

The non-expansive property of $\mathbf{T}$ is verified as follows. 
Since each $F_l(\cdot)$ is the proximal mapping of a convex function $h_l(\cdot)$,
$F_l(\cdot)$ belongs to the class of resolvents. 
Therefore, both the concatenated proximal mapping resolvent $\mathbf{F}$ and the reflection resolvent $2 \mathbf{F}-\mathbf{I}$ are non-expansive \cite{mann}. 
Besides, it can be easily proven that the operator $2 \mathbf{G}-\mathbf{I}$ is equivalent to right-multiplying $\mathbf{Q}^{(k)}$ by the matrix $(\frac{2}{L} \mathbf{1}_L \mathbf{1}_L^\top - \mathbf{I})$, which is the Householder transformation that belongs to orthogonal transformation and 
does not change the distance between two vectors.
Thus, $2 \mathbf{G}-\mathbf{I}$ is also non-expansive. 
Consequently, $\mathbf{T}$, as the composition operator of $2 \mathbf{G}-\mathbf{I}$ and $2 \mathbf{F}-\mathbf{I}$, is also non-expansive, and thus  iteration (\ref{mann}) is guaranteed to converge. 

\begin{algorithm}[t] 
\caption{Multi-BS Data Fusion}
\begin{algorithmic}[1]
\State \textbf{Input:} Initial guess $\mathbf{S}^{(0)}$, $\tilde{\mathbf{E}}_l^{(0)}$, convergence tolerance $\epsilon$, relaxation parameter $\rho \in (0,1)$ 
\State \textbf{Output:} Consensus equilibrium $\mathbf{s}^*$
\State Initialize iteration counter $k \leftarrow 0$
\State Compute $\mathbf{Q}^{(0)} \leftarrow (2\mathbf{G} - \mathbf{I})\mathbf{S}^{(0)}$
\Repeat
    \State Compute $\mathbf{T}^{(k)}(\mathbf{Q}^{(k)}) \leftarrow (2\mathbf{F}^{(k)} - \mathbf{I})(2\mathbf{G} - \mathbf{I})\mathbf{Q}^{(k)}$
    \State Compute $\mathbf{Q}^{(k+1)} \leftarrow \rho\mathbf{T}^{(k)}(\mathbf{Q}^{(k)}) + (1-\rho)\mathbf{Q}^{(k)}$
    \State Compute $\mathbf{S}^{(k+1)} \leftarrow (2 \mathbf{G} - \mathbf{I}) \mathbf{Q}^{(k+1)}$
    \State Update $\tilde{\mathbf{E}}_l^{(k+1)}$ with $\mathbf{S}^{(k+1)}$ based on (\ref{sense1}), (\ref{ee}), and (\ref{assemble})
    \State Update $F_l^{(k+1)}(\cdot)$ with $\tilde{\mathbf{E}}_l^{(k+1)}$ based on (\ref{prox}) 
    \State Update $\mathbf{F}^{(k+1)}(\cdot)$ with $F_l^{(k+1)}(\cdot)$ based on (\ref{defF})
    \State $k \leftarrow k + 1$
\Until{$\|\mathbf{Q}^{(k+1)} - \mathbf{Q}^{(k)}\|_F < \epsilon$}
\State Compute $\mathbf{S}^* \leftarrow (2\mathbf{G} - \mathbf{I}) \mathbf{Q}^{(k)}$ 
\State Compute final EM property reconstruction $\mathbf{s}^* \leftarrow \bar{x}(\mathbf{S}^*)$
\end{algorithmic}
\label{alg}
\end{algorithm}

\textcolor{blue}{%
Since $\tilde{\mathbf{E}}_l$ and $F_l(\cdot)$ are related to $\mathbf{s}$, we integrate the Born iterative method (BIM) \cite{lipp,lipp2,operator,m-born,v-born} into MACE.
During the iterative process of EM property sensing based on BIM, updating $\tilde{\mathbf{E}}_l$ and the function $F_l(\cdot)$ is crucial to refine EM property estimates.
Initially, $\tilde{\mathbf{E}}_l$ is estimated using the Born approximation \cite{born,born2}, 
which assumes that the scattering electric field is relatively minor compared to the incident electric field. 
With the Born approximation, there is}%
\textcolor{blue}{%
\begin{align}
\mathbf{D}_{k,l}\!\! &= \!\!\left\{\bar{\mathbf{H}}_{1,k}^\top 
\!\! \left[\mathbf{I} + \sum_{i=1}^{\infty} \left( (\mathbf{G}_k \textrm {diag}\left ({\boldsymbol{\chi }_{k}})^{-\top}\right)^{i} \right)\right]\right\} \circ (\mathbf{P}_{k,l} \mathbf{H}_{2,k,l}) \nonumber \\
&\approx \bar{\mathbf{H}}_{1,k}^\top \circ (\mathbf{P}_{k,l} \mathbf{H}_{2,k,l}) .   
\label{sense2}%
\end{align}}%
\color{blue}
The Born approximation (\ref{sense2}) allows for the calculation of the initial sensing matrix $\tilde{\mathbf{E}}_l^{(0)}$, which is then used to reconstruct the initial guess $\mathbf{S}^{(0)}$.
In each BIM iteration, $\tilde{\mathbf{E}}_l^{(k)}$ is updated to reflect the latest estimates of EM property, and $F_l^{(k)}(\cdot)$ is adjusted accordingly to compensate for the errors introduced by approximation (\ref{sense2}). 
This BIM adjustment leverages recursive linear approximations, which makes the RPCD reconstruction converge to accurate EM property of the target \cite{lipp,lipp2,operator,m-born,v-born}. 
\color{black}

The proposed iteration procedure of multi-BS data fusion based on BIM is summarized in Algorithm \ref{alg}. 
\textcolor{blue}{%
Note that only $\mathbf{S}^{(k)}$ is transmitted between BSs and the CPU while $\tilde{\mathbf{E}}_l^{(k)}$ and $F_l^{(k)}(\cdot)$ can be updated by each BS locally without being transmitted to the CPU.
The BSs only need to transmit a $ 2 M $-dimensional vector $\mathbf{s}_l^{(k)}$ to the CPU. 
If it takes $N_{fusion}$ iterations for Algorithm \ref{alg} to converge, then the transmission overhead is $2 M N_{fusion}$ real-valued numbers.
Comparatively, if the BSs adopt signal-level fusion that requires transmitting the general sensing matrix $\tilde{\mathbf{E}}_l$, a total of $4 U I N_r K M$ complex-valued numbers need to be transmitted. 
Since $2 M N_{fusion} < 4 U I N_r K M $ in most cases, the transmission overhead is decreased in Algorithm \ref{alg} compared to signal-level data fusion. 
}%

        
    

The complexity of Algorithm \ref{alg} is dominated by calculating $\mathbf{T}^{(k)}(\mathbf{Q}^{(k)})$ and updating $\mathbf{F}^{(k)}(\cdot)$. 
The complexity of calculating $\mathbf{T}^{(k)}(\mathbf{Q}^{(k)})$ is then dominated by calculating $\mathbf{F}^{(k)}(\mathbf{Q}^{(k)})$, i.e., $ O ( L N_{ADMM} (U I N_r K M ^2 + M ^3) )$.
Since the complexity of constructing  $\mathbf{D}_{k,l}$ is $ O( M ^ 3 + M ^ 2 U I + U I N_r M )$,
the complexity of updating $\tilde{\mathbf{E}}_l^{(k)}$
is $ O( K( M ^ 3 + M ^ 2 U I + U I N_r M ) )$. 
The complexity of updating $\mathbf{F}^{(k+1)}(\cdot)$ is then 
$ O( K L( M ^ 3 + M ^ 2 U I + U I N_r M ) )$.
Therefore, the general computational complexity of Algorithm \ref{alg} with $ N_{iter}$ iterations 
can be summarized as 
$ O( N_{iter} [ L N_{ADMM} (U I N_r K M ^2 + M ^3) + K L( M ^ 3 + M ^ 2 U I + U I N_r M ) ] ) $.

\section{Pilot Design and Material Identification} 

\subsection{Joint Design of Multi-Subcarrier Pilots}

\textcolor{blue}{%
Since there is no closed-form expression for the mean squared error (MSE) of RPCD reconstruction, we use mutual coherence as an alternative pilot design criterion.
In the context of compressive sensing, mutual coherence, which measures the correlation or linear dependence between signals or patterns, is a crucial metric used to evaluate the accuracy of RPCD reconstruction \cite{adaptive,adaptive2}.
By minimizing the column mutual coherence of the equivalent channels from the \(u\)-th UE to the target, i.e., \(\tilde{\mathbf{H}}_{1,k,u} \overset{\Delta}{=} \mathbf{H}_{1,k,u} \tilde{\mathbf{W}}_{k,u}\), we can enhance the diversity of incident EM wave patterns within region \(D\). 
With lower mutual coherence and more diverse EM wave patterns, richer information of the target's EM property can be extracted, which is crucial for effective EM property sensing.
}%
Since the UE typically does not have access to the other UEs' CSI, each UE should independently design its pilot patterns.

Define $\mathbf{A}_{k,u}=\left[\mathbf{H}_{1,1,u} \tilde{\mathbf{W}}_{1,u}, \cdots, \mathbf{H}_{1,k-1,u} \tilde{\mathbf{W}}_{k-1,u} \right] $, and define $\mathbf{A}_{1,u} = \mathbf{0}_{M}$ as a special case for $k = 1$.
\color{blue}
For each UE, we sequentially design the pilots for the $k$-th subcarrier from $1$ to $K$
based on the following optimization:
\color{black}
\begin{subequations}
\begin{align}
\min_{\tilde{\mathbf{W}}_{k,u},\xi_{k,u}}  \!\!&& \!\!\!\! \!\!\!\!
p(\tilde{\mathbf{W}}_{k,u},\xi_{k,u}) 
&= 
\left\|\tilde{\mathbf{W}}_{k,u}^H \mathbf{H}_{1,k,u}^H \mathbf{H}_{1,k,u} \tilde{\mathbf{W}}_{k,u}- \xi_{k,u} \mathbf{I} \right\|_F^2 \nonumber\\ 
&& &+ \gamma_{k,u} \left \| \mathbf{A}_{k,u}^H \mathbf{H}_{1,k,u} \tilde{\mathbf{W}}_{k,u} \right \|_F^2  \label{pgd0} \\
 \text { s.t. }  && \!\!\!\! \!\!\!\!
 \left \| \tilde{\mathbf{W}}_{k,u} \right \|_F^2 & \leq P_{k,u}, \label{pgd1} \\
  && I \xi_{k,u}  & \geq  \bar{P}_{k,u}   ,
\label{pgd1.5}
\end{align}%
\label{pppp}%
\end{subequations} 
where $\gamma_{k,u}$ is a penalty factor, $\xi_{k,u}$ is a scaling factor that represents the overall power of the objective pilot patterns in region $D$, 
$P_{k,u}$ is the general power budget of the $u$-th UE for the $k$-th subcarrier, and $\bar{P}_{k,u}$ is the minimum overall power in region $D$. 

\color{blue}
In the objective function (\ref{pgd0}), the first term aims at the reduction of the mutual coherence among different symbols transmitted within the same subcarrier.
This is achieved by making the Gram matrix of the equivalent channel as orthogonal as possible, for each UE and on each subcarrier.
Therefore, the ideal outcome is that the Gram matrix resembles $\xi_{k,u} \mathbf{I}$. 
The second term in (\ref{pgd0}) aims at the reduction of the mutual coherence between symbols transmitted across different subcarriers.
By minimizing this norm, the optimization ensures that the pilot patterns in region \(D\) across different subcarriers also remain as orthogonal as possible to each other, which is crucial to improve the overall system performance in multi-subcarrier ISAC environments.
\color{black} 
Therefore, a fully digital beamforming architecture is necessitated to adjust $\tilde{\mathbf{W}}_{k,u}$ on different carriers. 

In order to solve the optimization problem (\ref{pppp}), we can apply the projected gradient descent (PGD) method combined with Wirtinger calculus to the objective function (\ref{pgd0}) \cite{Wirtinger}.
We first need to calculate the gradient of \( p(\tilde{\mathbf{W}}_{k,u},\xi_{k,u}) \) with respect to the conjugate of the variable \( \tilde{\mathbf{W}}_{k,u}^* \). Using Wirtinger derivatives, we obtain 
\begin{align}%
& \frac{\partial p(\tilde{\mathbf{W}}_{k,u},\xi_{k,u})}{\partial \tilde{\mathbf{W}}_{k,u}^*}  \!  =  \!2 \mathbf{H}_{1,k,u}^H \mathbf{H}_{1,k,u} \! \tilde{\mathbf{W}}_{k,u} \! \left(\tilde{\mathbf{W}}_{k,u}^H \mathbf{H}_{1,k,u}^H  \right. \! \nonumber\\
& \times \left. \mathbf{H}_{1,k,u} \! \tilde{\mathbf{W}}_{k,u} \!- \!  \xi_{k,u} \!\mathbf{I} \right) 
 \!+\! \gamma_{k,u} \mathbf{H}_{1,k,u}^H \mathbf{A}_{k,u} \mathbf{A}_{k,u}^H \mathbf{H}_{1,k,u} \!\tilde{\mathbf{W}}_{k,u}.
\label{pgd2}%
\end{align}

The PGD method involves two key steps: the gradient descent update and the projection onto the feasible set given by the constraints.
Specifically, the update process for $\mathbf{W}_{k,u}^{(t)}$ can be formulated as 
\begin{align}
\dot{\mathbf{W}}_{k,u}^{(t+1)} &= \tilde{\mathbf{W}}_{k,u}^{(t)} - \alpha_{k,u} \frac{\partial p(\tilde{\mathbf{W}}_{k,u}^{(t)},\xi_{k,u}^{(t)} ) }{\partial \tilde{\mathbf{W}}_{k,u}^*} , \label{pgd2.5}  \\
\tilde{\mathbf{W}}_{k,u}^{(t+1)} &= \min \left\{ 1 , \sqrt{\frac{P_{k,u}}{\left\|\dot{\mathbf{W}}_{k,u}^{(t+1)}\right\|_F^2}} \right\} \dot{\mathbf{W}}_{k,u}^{(t+1)},
\label{pgd3} 
\end{align}
where $\alpha_{k,u}$ is the step size determined by the Armijo backtracking line search method, 
and the superscript $(t)$ denotes the iteration number.

Next, the scaling factor is updated by directly minimizing (\ref{pgd0}) and projecting $\xi_{k,u}$ onto (\ref{pgd1.5}) as 
\begin{align}
\xi_{k,u}^{(t+1)} &= \max \left \{ \frac{1}{I} \bar{P}_{k,u},
\frac{1}{I} \text{Tr} \left [ 
 (\tilde{\mathbf{W}}_{k,u}^{(t)} ) ^H \mathbf{H}_{1,k,u}^H \mathbf{H}_{1,k,u} \tilde{\mathbf{W}}_{k,u}^{(t)} \right] \right \} \nonumber \\
 & = \frac{1}{I} \max \left \{ \bar{P}_{k,u} , \left \| \mathbf{H}_{1,k,u} \tilde{\mathbf{W}}_{k,u}^{(t)} \right\|_F ^2  \right \}.
\label{pgd4} 
\end{align}
The formulations (\ref{pgd2.5})-(\ref{pgd4}) ensure that each iterative update not only minimizes $p(\tilde{\mathbf{W}}_{k,u},\xi_{k,u})$ but also adheres to the constraints imposed by the power budget and the minimum power of the objective pilot patterns in region $D$.

The computational complexity of the joint design of multi-subcarrier pilots is dominated by the calculation of $\frac{\partial p(\tilde{\mathbf{W}}_{k,u},\xi_{k,u})}{\partial \tilde{\mathbf{W}}_{k,u}^*}$. 
Suppose the PGD method takes $N_{k,u}$ iterations to converge, and then the complexity of designing $\tilde{\mathbf{W}}_{k,u}$ is $ O ( N_{k,u} (  I N_t M + M^2 I+ I^2 N_t + I N_t^2 ) )$. 

\color{blue}
\begin{theorem}
The PGD algorithm applied to problem (\ref{pppp}) is guaranteed to converge to a stationary point, which can be either a local minimum or a saddle point.
\label{pgdt1}
\end{theorem}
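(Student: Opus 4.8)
The plan is to treat the PGD iteration (\ref{pgd2.5})--(\ref{pgd4}) as a Gauss--Seidel alternating-block scheme and invoke the standard sufficient-decrease argument for projected-gradient methods over a smooth objective on a compact feasible set. First I would note that $p(\tilde{\mathbf{W}}_{k,u},\xi_{k,u})$ is a polynomial---quartic in the entries of $\tilde{\mathbf{W}}_{k,u}$ and quadratic in $\xi_{k,u}$---and hence $C^{\infty}$; moreover constraint (\ref{pgd1}) confines $\tilde{\mathbf{W}}_{k,u}$ to the compact Frobenius ball $\mathcal{B}_{k,u}=\{\tilde{\mathbf{W}}:\|\tilde{\mathbf{W}}\|_F^2\le P_{k,u}\}$, on which the gradient (\ref{pgd2}) is Lipschitz continuous with some constant $L_{k,u}<\infty$. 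This Lipschitz property is precisely what guarantees the Armijo backtracking line search in (\ref{pgd2.5}) terminates after finitely many trials and returns a step size $\alpha_{k,u}$ bounded below by a positive constant.

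Next I would establish a per-iteration sufficient-decrease inequality for each block. The $\xi$-update (\ref{pgd4}) is the exact minimizer of the convex quadratic $\xi\mapsto p(\tilde{\mathbf{W}}_{k,u},\xi)$ projected onto the half-line (\ref{pgd1.5}), so it can only reduce the objective. For the $\tilde{\mathbf{W}}$-block, the projected-gradient step with the Armijo rule yields the descent bound $p(\tilde{\mathbf{W}}_{k,u}^{(t+1)},\xi_{k,u}^{(t)})\le p(\tilde{\mathbf{W}}_{k,u}^{(t)},\xi_{k,u}^{(t)})-c_{k,u}\|\mathbf{G}^{(t)}\|_F^2$, where $\mathbf{G}^{(t)}=\tfrac{1}{\alpha_{k,u}}(\tilde{\mathbf{W}}_{k,u}^{(t)}-\tilde{\mathbf{W}}_{k,u}^{(t+1)})$ is the gradient mapping induced by the projection (\ref{pgd3}) and $c_{k,u}>0$. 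Chaining the two block-wise decreases shows the joint objective sequence $\{p^{(t)}\}$ is monotonically nonincreasing; since $p\ge 0$ it is bounded below and therefore converges. Telescoping the decrease inequality over $t$ then gives $\sum_{t} c_{k,u}\|\mathbf{G}^{(t)}\|_F^2<\infty$, which forces $\|\mathbf{G}^{(t)}\|_F\to 0$.

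Finally I would extract a convergent subsequence. The iterates $\tilde{\mathbf{W}}_{k,u}^{(t)}$ lie in the compact set $\mathcal{B}_{k,u}$ and the associated $\xi_{k,u}^{(t)}$ are bounded through (\ref{pgd4}) by $\tfrac{1}{I}\max\{\bar{P}_{k,u},\|\mathbf{H}_{1,k,u}\|_2^2 P_{k,u}\}$, so Bolzano--Weierstrass supplies a limit point $(\tilde{\mathbf{W}}_{k,u}^{\star},\xi_{k,u}^{\star})$. Continuity of the gradient mapping together with $\|\mathbf{G}^{(t)}\|_F\to 0$ gives $\mathbf{G}(\tilde{\mathbf{W}}_{k,u}^{\star})=\mathbf{0}$, i.e.\ $\tilde{\mathbf{W}}_{k,u}^{\star}$ is a fixed point of the projected-gradient map, which is exactly the first-order (KKT) stationarity condition for (\ref{pppp}), with $\xi$-optimality following from (\ref{pgd4}); since $p$ is a polynomial it additionally satisfies the Kurdyka--{\L}ojasiewicz inequality, upgrading this to convergence of the whole sequence to a single stationary point. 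I expect the main obstacle to be the careful treatment of the projection-coupled two-block structure when proving sufficient decrease---in particular verifying that $\mathbf{G}^{(t)}\to\mathbf{0}$ certifies KKT stationarity over the ball constraint rather than merely $\nabla_{\tilde{\mathbf{W}}}p=\mathbf{0}$. Because $p$ is nonconvex (quartic in $\tilde{\mathbf{W}}_{k,u}$), no second-order curvature is exploited, so the limit point can only be guaranteed to be a stationary point---a local minimum or a saddle---exactly as the theorem asserts.
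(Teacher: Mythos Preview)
Your proposal is correct and in fact more detailed than the paper's own proof. The paper proceeds by simply verifying four standing assumptions---continuous differentiability of $p$, Lipschitz continuity of the gradient on the bounded feasible set, closedness and convexity of the constraint set (Frobenius ball $\times$ half-line), and lower boundedness of $p$---and then invokes a black-box convergence theorem for projected gradient descent on closed convex sets (citing \cite{chen2019non}) under a constant step size $0<\alpha_{k,u}<1/L'$. It does not explicitly address the alternating two-block structure of (\ref{pgd2.5})--(\ref{pgd4}), nor the Armijo line search, nor whole-sequence convergence.

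Your route is a self-contained sufficient-decrease argument: you treat the $\tilde{\mathbf{W}}$- and $\xi$-updates as a Gauss--Seidel scheme, derive the descent lemma with the gradient-mapping residual, telescope to obtain $\|\mathbf{G}^{(t)}\|_F\to 0$, extract a limit via compactness, and then use the polynomial (hence KL) structure to upgrade subsequential to full-sequence convergence. This buys you (i) an explicit treatment of the block alternation and Armijo search that the paper's invocation of a single-variable PGD theorem glosses over, and (ii) the stronger whole-sequence conclusion via KL. The paper's approach, by contrast, is shorter but leans entirely on an external reference and a constant-stepsize hypothesis that does not match the Armijo rule actually used in (\ref{pgd2.5}). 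One small point to watch: in the paper's update (\ref{pgd4}) the $\xi$-step uses $\tilde{\mathbf{W}}_{k,u}^{(t)}$ rather than $\tilde{\mathbf{W}}_{k,u}^{(t+1)}$, so the scheme is Jacobi-type rather than strictly Gauss--Seidel; your descent chaining still goes through, but the order of the two block decreases should be stated accordingly.
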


\begin{proof}
The proof consists of establishing 4 key properties and then applying the convergence theorem for PGD algorithm on problem (\ref{pppp}).

1. Continuous differentiability of $p(\tilde{\mathbf{W}}_{k,u},\xi_{k,u})$:
The objective function is a multivariate polynomial and is thus continuously differentiable.

2. Lipschitz continuity of the gradient of $ p(\tilde{\mathbf{W}}_{k,u},\xi_{k,u})$:
The gradient of $p(\tilde{\mathbf{W}}_{k,u},\xi_{k,u})$ is also a multivariate polynomial.
Since $\mathbf{W}_{k,u}$ is finite due to the constraints and $\xi_{k,u}$ is also finite according to (\ref{pgd4}), the gradient is Lipschitz continuous on the constraint set.

3. Closed and convex constraint set:
\begin{enumerate}
    \item The constraint $\left \| \tilde{\mathbf{W}}_{k,u} \right \|_F^2 \leq P_{k,u}$ defines a closed and convex set (a ball in the Frobenius norm).
    \item The constraint $I \xi_{k,u} \geq \bar{P}_{k,u}$ defines a closed halfspace, which is convex.
\end{enumerate}

4. Lower boundedness of (\ref{pgd0}) on the constraint set:
The objective function (\ref{pgd0}) consists of a sum of squared norms and is bounded from below by zero.

Given these 4 properties, we can apply the convergence theorem for PGD on closed and convex constraint set \cite{chen2019non}:

Suppose the step size $\alpha_{k,u} $ satisfies $0 < \alpha_{k,u} < \frac{1}{L'}$, where $L'$ is the Lipschitz constant of $\frac{\partial p(\tilde{\mathbf{W}}_{k,u}^{(t)},\xi_{k,u}^{(t)} ) }{\partial \tilde{\mathbf{W}}_{k,u}^*}$. Then:
\begin{enumerate}
    \item The sequence $\{p(\tilde{\mathbf{W}}_{k,u}^{(t)},\xi_{k,u}^{(t)})\}$ converges when $t \rightarrow \infty$.
    \item The sequence $\{\tilde{\mathbf{W}}_{k,u}^{(t)},\xi_{k,u}^{(t)}\}$ converges when $t \rightarrow \infty$.
    \item The limit point of $\{\tilde{\mathbf{W}}_{k,u}^{(t)},\xi_{k,u}^{(t)}\}$  is a stationary point of (\ref{pppp}) when $t \rightarrow \infty$.
\end{enumerate}

Therefore, the PGD algorithm applied to problem (\ref{pppp}) will converge to a stationary point, which can be either a local minimum or a saddle point. 
\end{proof} 

\begin{remark}
While Theorem \ref{pgdt1} guarantees the convergence to a stationary point, it does not guarantee that this point is a global minimum due to the non-convexity of problem (\ref{pppp}). In practice, running the PGD algorithm multiple times with different random initializations is beneficial to increase the likelihood of finding a good local minimum.
\end{remark}
\color{black}

\begin{figure*}[t]
  \centering
\begin{minipage}[t]{0.33\linewidth}
\subfigure[UE-1, subcarrier-1]{
\includegraphics[width=6cm,height=5.5cm]{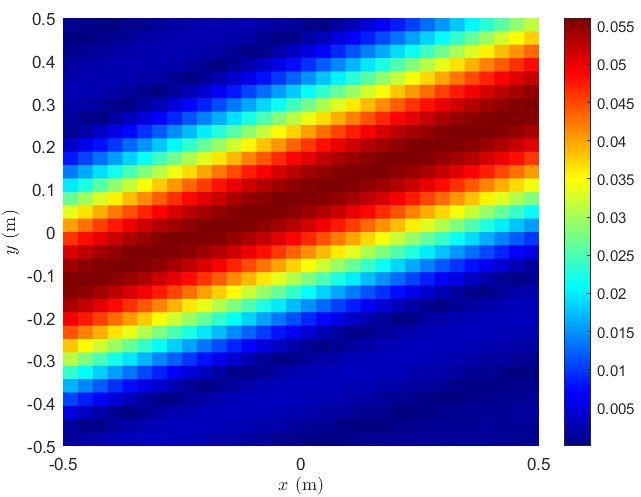}} 
\end{minipage}
\begin{minipage}[t]{0.33\linewidth}
\subfigure[UE-1, subcarrier-16]{
\includegraphics[width=6cm,height=5.5cm]{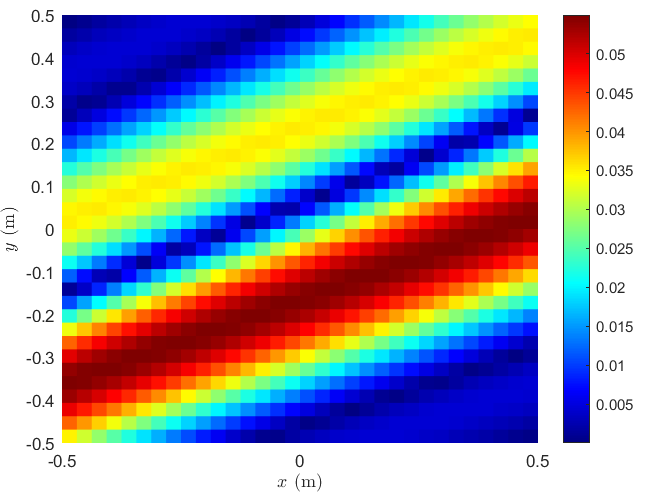}}  
\end{minipage}  
\begin{minipage}[t]{0.32\linewidth}
\subfigure[UE-1, subcarrier-32]{
\includegraphics[width=6cm,height=5.5cm]{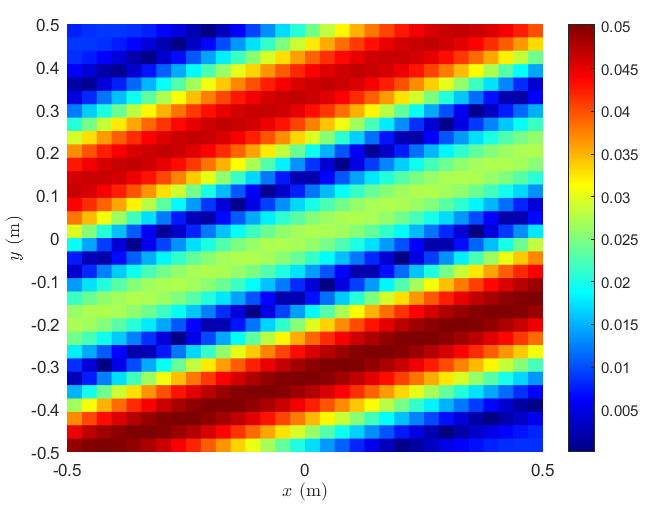}}   
\end{minipage} \\       
\begin{minipage}[t]{0.33\linewidth}
\subfigure[UE-2, subcarrier-1]{
\includegraphics[width=6cm,height=5.5cm]{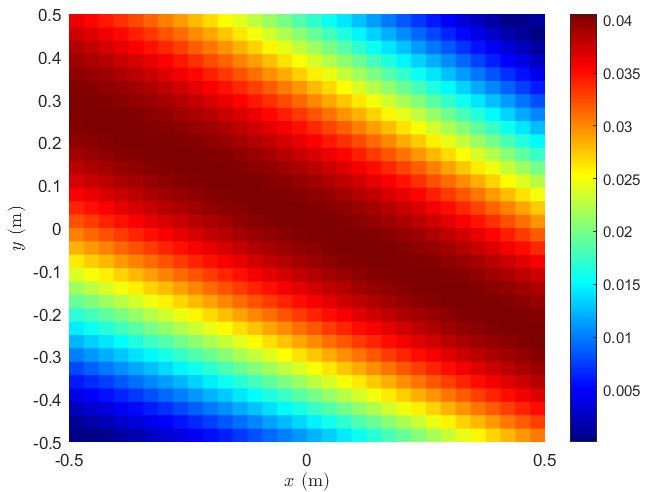}}
\end{minipage}
\begin{minipage}[t]{0.33\linewidth}
\subfigure[UE-2, subcarrier-16]{
\includegraphics[width=6cm,height=5.5cm]{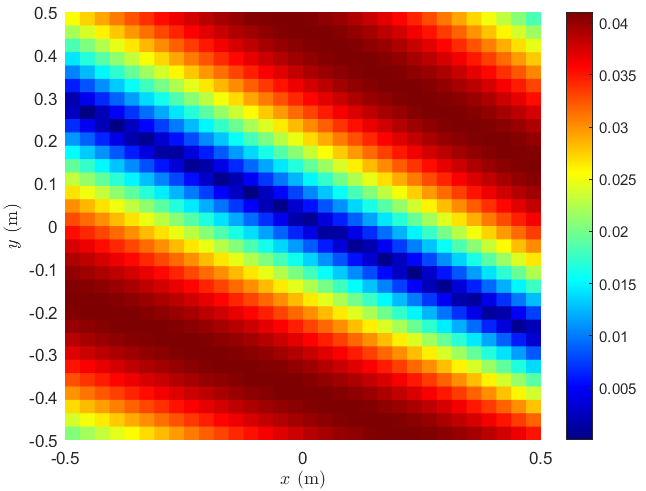}}
\end{minipage}
\begin{minipage}[t]{0.32\linewidth}
\subfigure[UE-2, subcarrier-32]{ 
\includegraphics[width=6cm,height=5.5cm]{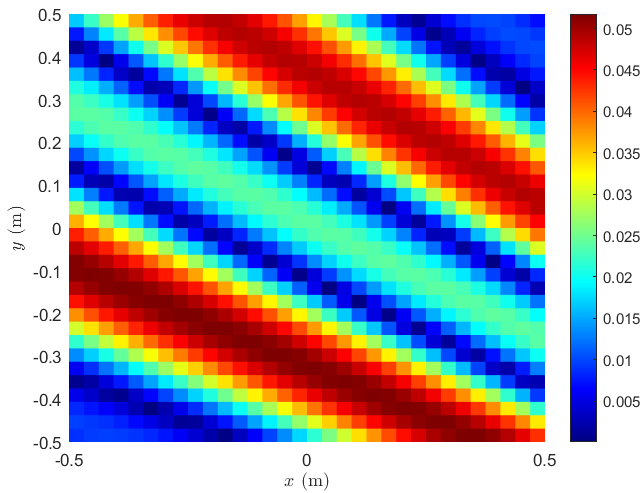}}
\end{minipage}
\caption{Pilot patterns in the region $D$ generated by different UEs and subcarriers demonstrated by $\left|\mathbf{E}_k^{i,D}\right|$. 
}
\label{pilot1}
\end{figure*}

To provide a vivid illustration of the joint multi-subcarrier pilots design, we show in Fig.~\ref{pilot1} the pilot patterns $\left|\mathbf{E}_k^{i,D}\right|$ in region $D$. 
Suppose prior knowledge determines that the target is located within $ D = [-0.5,0.5] \times [-0.5,0.5]$~$\mathrm{m}^2$. 
We choose the number of sampling points as $M = 32 \times 32 = 1024$. 
There are $2$ UEs equipped with $8$-antennas uniform linear array (ULA). 
The inter-antenna spacing is set as $0.005$ m.
Two UEs are located at $(-9.3,-3.8)$ m and $(4.4,-2.4)$ m, respectively. 
The central frequency is $f_c = 30$ GHz and the frequency spacing of OFDM subcarriers is $\Delta f = 120$ KHz.
For each UE, pilots are transmitted in $K = 32$ subcarriers. 
For brevity, we only show the pattern of the first pilot symbol in each subcarrier. 

As shown in Fig. \ref{pilot1}, the different subcarriers provide distinct pilot patterns in region $D$.
However, due to the relatively small number of the UE antennas in practice, the stripes of the patterns generated by the same UE 
are roughly in the same direction that is mainly determined by the position of the UE. 
Thus, a single UE may not be sufficient to generate pilots for the reconstruction of RPCD within $D$. 
To address this problem, we require multiple users at different positions to provide diverse views and resolutions of RPCD. 
Besides, since UE-2 is closer to $D$, the near-field effect is more prominent and the pilot pattern stripes are not parallel.
Closer proximity leads to more diversity in the pattern style, which is more beneficial to RPCD reconstruction. 
Despite the fact that each UE designs its pilots independently in (\ref{pppp}), the pilot patterns of different UEs are inherently distinct and with low mutual coherence when the UEs are in different positions. 

\color{blue}
\begin{remark}
OFDM contributes in reducing mutual coherence by enlarging the column space of $[\tilde{\mathbf{H}}_{1,1,u}, \cdots , \tilde{\mathbf{H}}_{1,K,u}]$.
No matter how large $I$ is, the dimension of the column space of $\tilde{\mathbf{H}}_{1,k,u}$ for each $k$ is bounded by $\text{dim}(\text{col}(\tilde{\mathbf{H}}_{1,k,u})) = \text{dim}(\text{col}(\mathbf{H}_{1,k,u} \tilde{\mathbf{W}}_{k,u})) \leq N_t$ due to $\mathbf{H}_{1,k,u} \in \mathbb{C}^{M \times N_t}$, which is usually much smaller than $M$. 
However, the dimension of the column space of the combined matrix $[\tilde{\mathbf{H}}_{1,1,u},\cdots , \tilde{\mathbf{H}}_{1,K,u}]$ with $K$ subcarriers is bounded by $\text{dim}(\text{col}([\tilde{\mathbf{H}}_{1,1,u}, \cdots , \tilde{\mathbf{H}}_{1,K,u}])) \leq N_t K$.
The dimension of the column space increases significantly from not larger than $N_t$ to potentially $N_t K$, which effectively enlarges the space of pilot patterns generated by the $u$-th UE.
Consequently, the expansion of the column space reduces the likelihood of overlap between pilot patterns from the $u$-th UE, thereby decreasing mutual coherence in the OFDM sensing system.
\end{remark}
\color{black}

\subsection{Material Identification Methodology}
After RPCD reconstruction, the materials of the target can then be identified.
Suppose that the target is known to be constituted by  several possible materials, whose permittivity and conductivity are all precisely measured in advance.  
From the perspective of EM property sensing, 
only the materials with significant discrepancies in permittivity or conductivity can be distinguished. 
In detail, the material identification process consists of two steps: clustering and then classification. 

In order to determine the material of the target, we first need to distinguish between the parts of region $D$ occupied by different materials. 
To accomplish this, we utilize the density-based spatial clustering of applications with noise (DBSCAN) algorithm to divide the sampling points within $D$ into several categories \cite{dbscan1}. 
DBSCAN is an unsupervised algorithm with strong generalization of clusters for different shapes and sizes, and is guaranteed to converge regardless of the input data distribution.
The approach to determining the DBSCAN hyper-parameters has been discussed in \cite{dbscan3}.
Since the relative permittivity and conductivity have different dimensions, 
we adopt the non-dimensional and scale-invariant Mahalanobis distance as the metric in the DBSCAN clustering algorithm \cite{mahalanobis}.
The number of clusters is self-adaptive and represents the estimated number of material categories.  
The cluster centroid of the air, which represents the average permittivity and conductivity values of the air, is typically close to $(1,0)$. 
Conversely, the cluster centroids of the target materials, which individually represent their average permittivity and conductivity, are generally positioned significantly away from $(1,0)$.  

After DBSCAN clustering is performed, the next step is to determine the material categories of the target components. 
The procedure begins by eliminating outlier points from the data.
Subsequently, we calculate the Mahalanobis distances between the centroids of each cluster, which represent the target materials, and the measured EM property values of each candidate material. 
The target materials are subsequently assigned to the categories that correspond to the smallest Mahalanobis distance,
signifying the highest degree of similarity between the measured EM property of the target and the recognized characteristics of the known materials. 


The computational complexity of the material identification is dominated by the DBSCAN clustering algorithm, whose complexity is $\mathcal{O}(M^2)$ in the worst case \cite{dbscan2} and can be further improved by applying acceleration techniques \cite{dbscan4}. 

\section{Simulation Results and Analysis}

\textcolor{blue}{
Suppose the target is located within the region $D = [-1,1] \times [-1,1]$~$\mathrm{m}^2$ unless otherwise specified.
A total of $U = 10$ UEs are randomly located in a circular region of radius $10$~m centered at the origin but outside region $D$.
We consider a ULA transmitter equipped with $N_t = 8$ antennas at each UE. 
The power budget is the same for all UEs and is evenly allocated to all subcarriers.
For each UE, $ I = 16 $ pilot OFDM symbols are transmitted on each subcarrier. 
Assume $4$ BSs are located at $(100,0)$~m, $(0,100)$~m, $(-100,0)$~m, and $(0,-100)$~m for $l=1,2,3,4$, respectively, and are fused in sequence. 
We set $\zeta_l = 1/L, l = 1,\cdots, L.$
Each BS is equipped with $N_r = 64$ ULA antennas, and the ULA is perpendicular to the line from the origin to the BS center. 
We assume that the sum of thermal noise and possible interference from unknown EM sources or scatters 
at each BS is subject to the same zero-mean circularly symmetric complex Gaussian distribution with the same covariance.
SNR is then defined as the ratio of the overall received signal power from all BSs to the overall noise power. 
The central frequency is $f_c = 28$ GHz and the frequency spacing of OFDM subcarriers is $ \Delta f = 240$ KHz according to the 3GPP standard \cite{3gpp}. 
The inter-antenna spacing for both the UEs and the BSs is set as $\lambda_c/2 = 0.0054$ m.}
We choose the number of sampling points in region $D$ as $ M = 64 \times 64 = 4096$. 
Each sampling point within $D$ can be regarded as a single pixel in an image, i.e., the RPCD image consists of $64 \times 64 = 4096$ pixels.

Besides, we introduce the normalized mean square error (NMSE) of RPCD reconstruction as the criterion to evaluate the performance of EM property sensing 
\begin{align}
\mathrm{NMSE}&=
10 \log_{10} 
\frac{\left\|\mathbf{s}-\mathbf{s}^*\right\|_2^{2}}{\left\|\mathbf{s}\right\|_2^{2}}  \nonumber\\
&= 
10 \log_{10}
\frac{\left\|\boldsymbol{\epsilon}_r-\boldsymbol{\epsilon}_r^*\right\|_2^{2} + \frac{1}{\omega_c^2 \epsilon_0^2}\left\|\boldsymbol{\sigma}-\boldsymbol{\sigma}^*\right\|_2^{2}} {\left\|\boldsymbol{\epsilon}_r\right\|_2^{2} + \frac{1}{\omega_c^2 \epsilon_0^2}\left\|\boldsymbol{\sigma}\right\|_2^{2}}.
\label{NMSE}
\end{align} 
In (\ref{NMSE}), the NMSE of RPCD combines the estimation errors of both $\boldsymbol{\epsilon}_r$ and $\boldsymbol{\sigma}$,
which assesses the EM property sensing performance in a comprehensive way.  
Moreover, the material classification accuracy is defined as the ratio of the number of correctly classified samples to the total number of tested samples.



\begin{figure*}[t]
  \centering
\begin{minipage}[t]{0.33\linewidth}
\subfigure[Target real relative permittivity]{
\includegraphics[width=6cm,height=5.5cm]{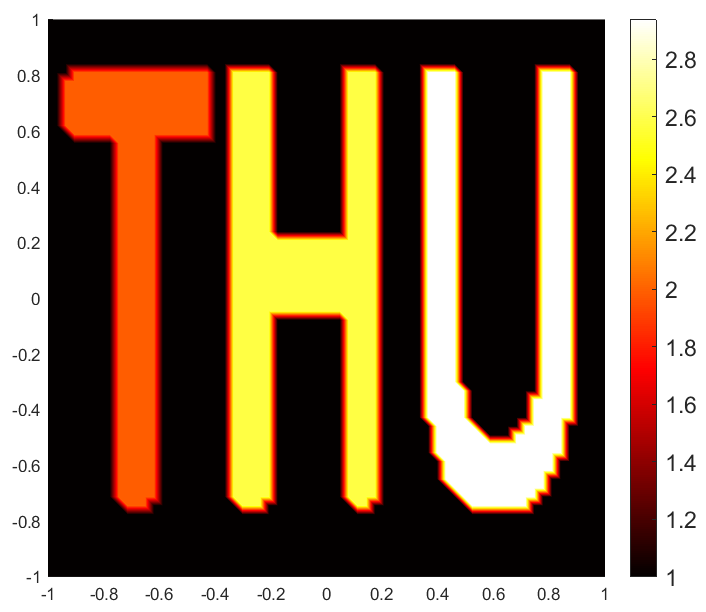}} 
\end{minipage}
\begin{minipage}[t]{0.33\linewidth}
\subfigure[Reconstructed relative permittivity \protect\\ with SNR = 0 dB]{
\includegraphics[width=6cm,height=5.5cm]{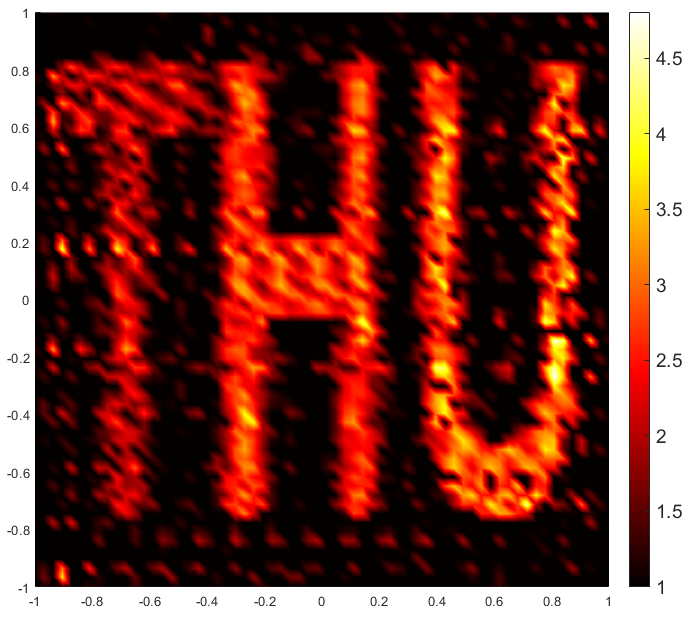}} 
\end{minipage}  
\begin{minipage}[t]{0.32\linewidth}
\subfigure[Reconstructed relative permittivity \protect\\ with SNR = 30 dB]{
\includegraphics[width=6cm,height=5.5cm]{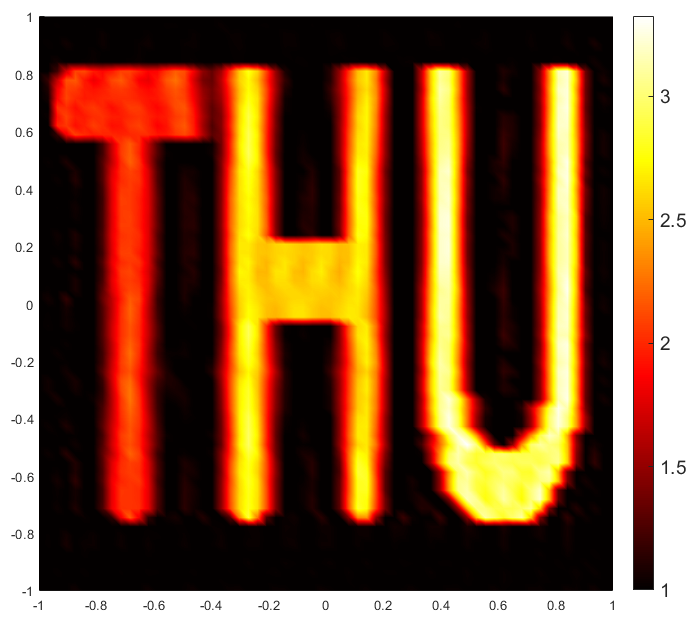}}   
\end{minipage} \\              
\begin{minipage}[t]{0.33\linewidth}
\subfigure[Target real conductivity]{
\includegraphics[width=6cm,height=5.5cm]{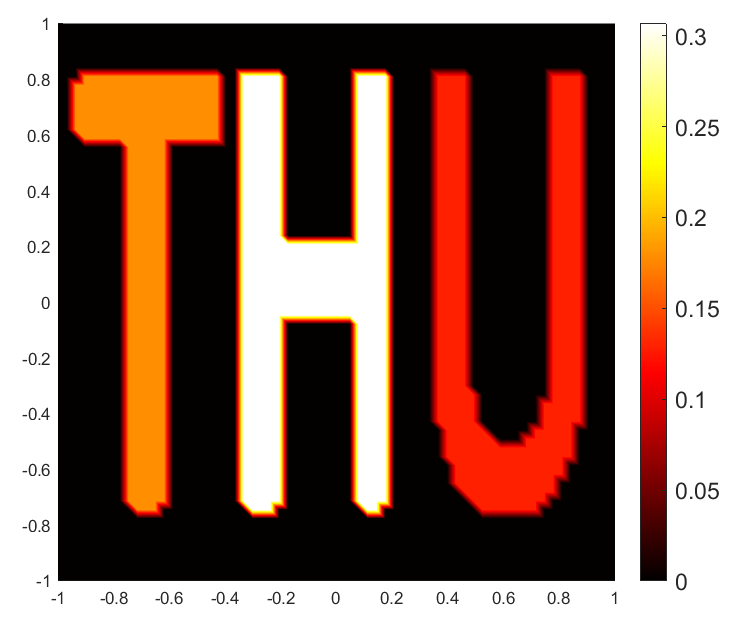}}
\end{minipage}
\begin{minipage}[t]{0.33\linewidth}
\subfigure[Reconstructed conductivity   with SNR = 0 dB]{
\includegraphics[width=6cm,height=5.5cm]{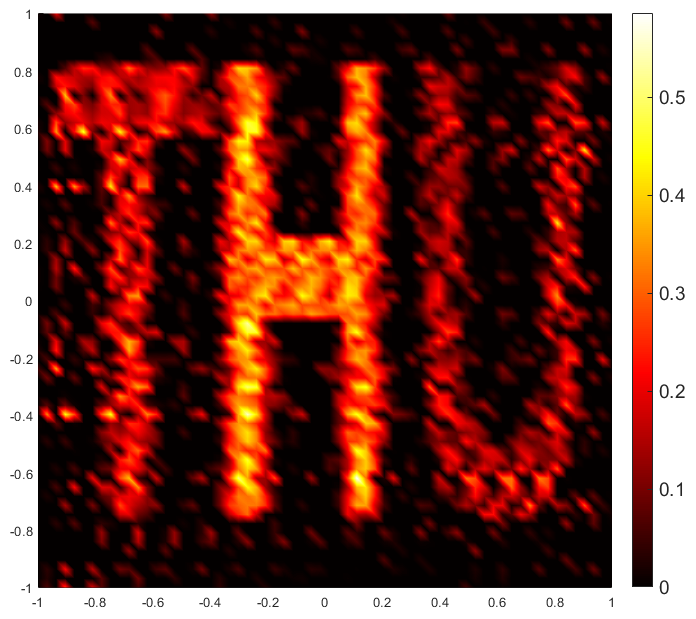}}
\end{minipage}
\begin{minipage}[t]{0.32\linewidth}
\subfigure[Reconstructed conductivity  with SNR = 30 dB]{
\includegraphics[width=6cm,height=5.5cm]{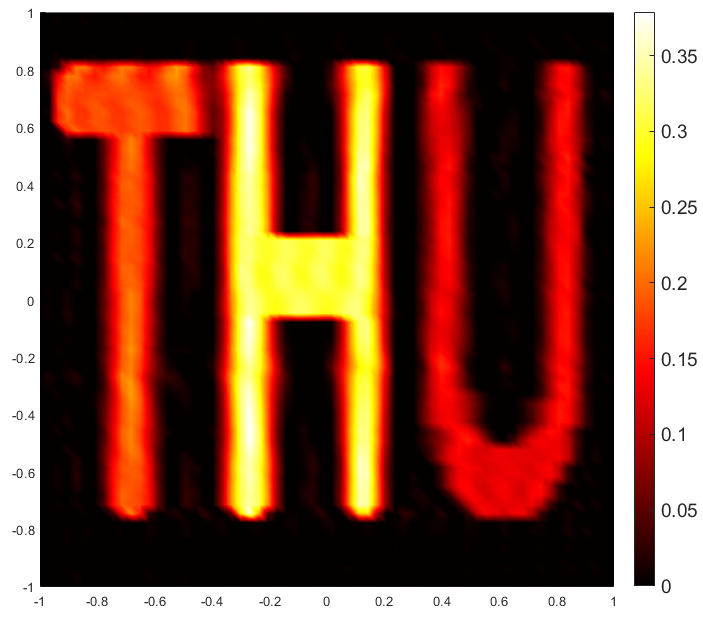}}
\end{minipage}
\caption{RPCD reconstruction results versus SNR with $L=4$ BSs and $ K= 64 $ subcarriers. 
Unit of conductivity is S/m.
}
\label{image1}
\end{figure*}

\begin{figure}[t]
  \centering
\centerline{\includegraphics[width=8.4cm,height=6.5cm]{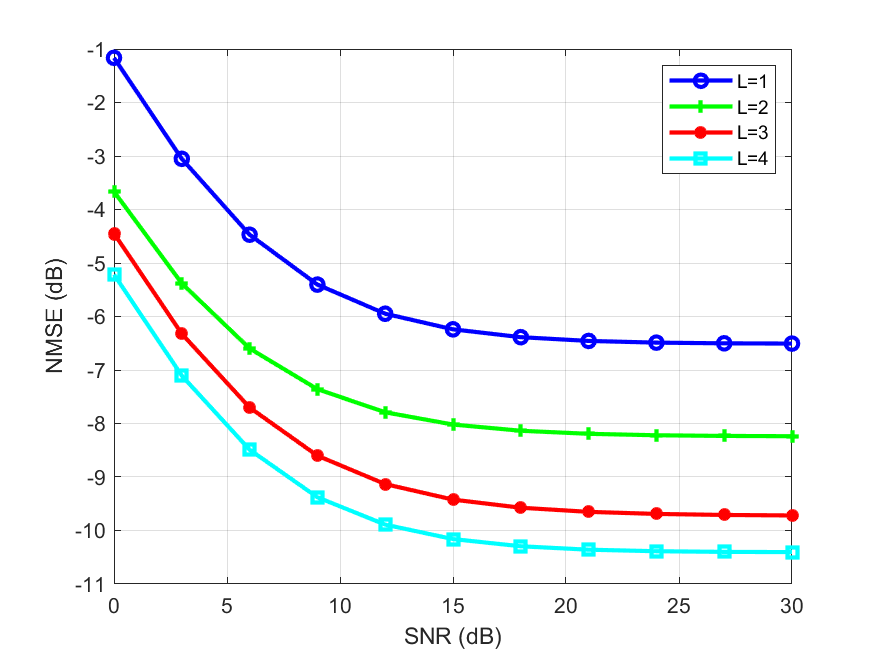}}
  \caption{\textcolor{black}{NMSE of PRCD versus SNR with $K = 32$.}}
  \label{nmse1}
\end{figure}




\color{blue}
\subsection{RPCD Reconstruction Performance versus SNR}
To illustrate the RPCD reconstruction results vividly, we present the reconstructed images of the target 
based on relative permittivity and conductivity, respectively, with $L = 4$, $K = 32$, and SNR $=0$ dB or $30$ dB in Fig.~\ref{image1}.
The target ``THU" is the abbreviation of ``Tsinghua University", and the three letters are composed of wood, chipboard, and plasterboard, respectively.
It is seen from Fig.~\ref{image1} that, the reconstructed RPCD can reflect the general shape of the target. 
The RPCD reconstructed at SNR $ = 30$ dB is much more accurate to demonstrate the target's shape compared to the RPCD reconstructed at SNR $ = 0$ dB.
Moreover, a higher SNR value results in more accurate reconstructed values of relative permittivity and conductivity, which leads to better representation of the target's real EM property.  

We explore the NMSE of reconstructed RPCD versus the SNR, as shown in Fig.~\ref{nmse1}. 
We set the number of subcarriers to be $K = 32$ and sequentially select several BSs and fuse their data at the feature level to enhance the overall accuracy of the reconstructed RPCD. 
It is seen from Fig.~\ref{nmse1} that, the NMSE decreases with the increase of SNR in all cases, and the best performance is achieved when $L = 4$.  
The NMSE decreases fast in the beginning when the SNR increases from $0$~dB to $10$~dB.  
When the SNR reaches a threshold of approximately $20$ dB, the NMSE reduces to an error floor and remains largely unchanged.
At this level, the limitation on RPCD reconstruction quality is no longer due to noise at the BS but rather to the 
insufficient information about the target captured in the general sensing matrix.
As signals from more BSs are utilized, more information of the target is integrated, which leads to a smaller error floor in RPCD reconstruction. 

\begin{figure*}[t]
\captionsetup[subfigure]{singlelinecheck=false} 
\begin{minipage}[t]{0.33\linewidth}
\captionsetup[subfigure]{singlelinecheck=false}
\subfigure[Target relative permittivity]{
\includegraphics[width=6cm,height=5.5cm]{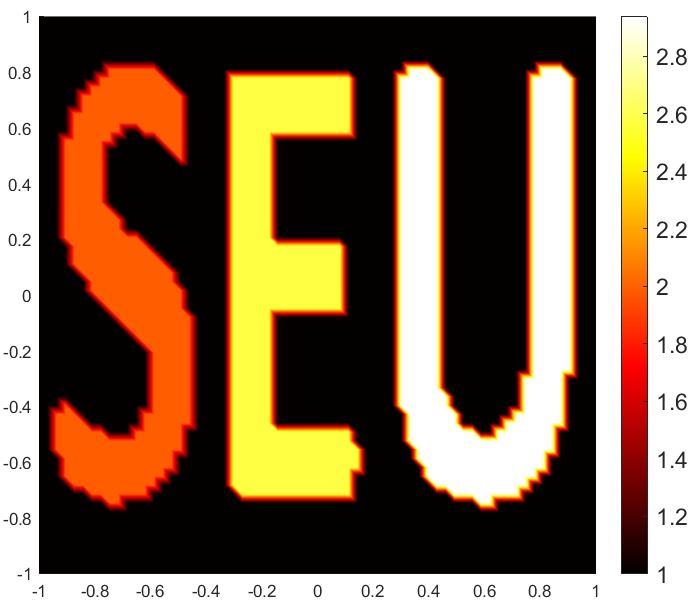}} 
\end{minipage}
\begin{minipage}[t]{0.33\linewidth}
\subfigure[Reconstructed relative permittivity with $K$ = 16]{
\includegraphics[width=6cm,height=5.5cm]{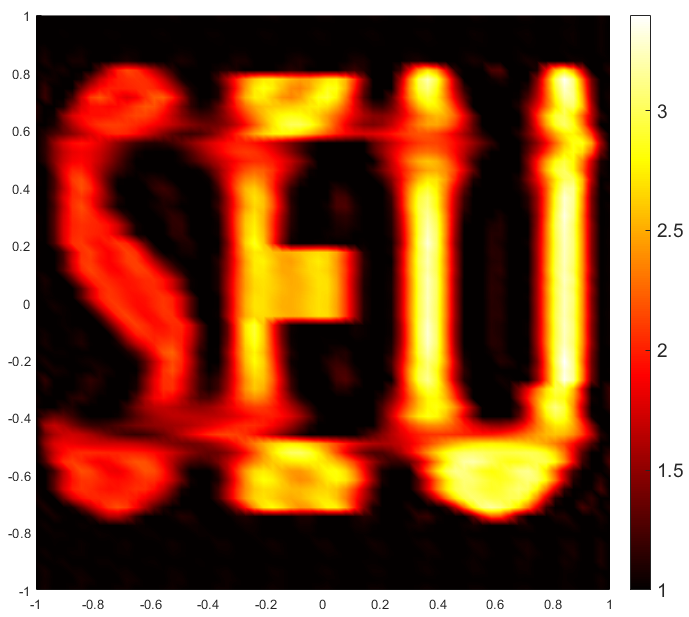}}  
\end{minipage}  
\begin{minipage}[t]{0.32\linewidth}
\subfigure[Reconstructed relative permittivity  with $K$ = 64]{
\includegraphics[width=6cm,height=5.5cm]{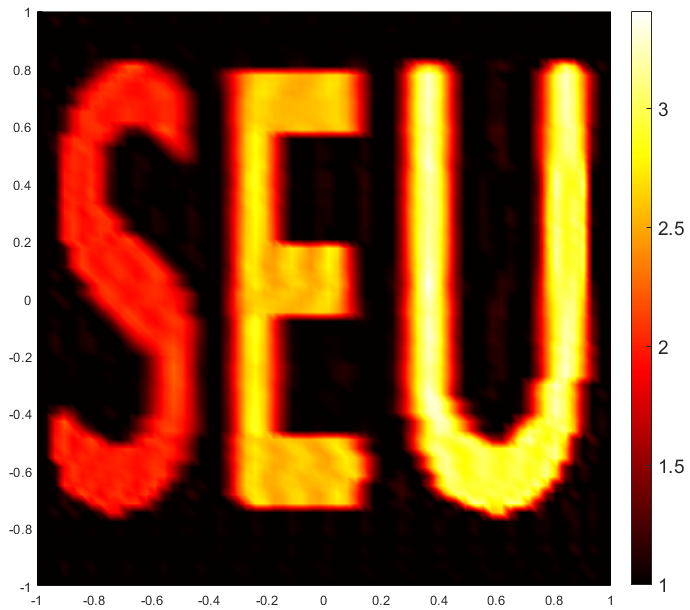}}   
\end{minipage} \\              
\begin{minipage}[t]{0.33\linewidth}
\captionsetup{singlelinecheck=false}
\subfigure[Target conductivity]{
\includegraphics[width=6cm,height=5.5cm]{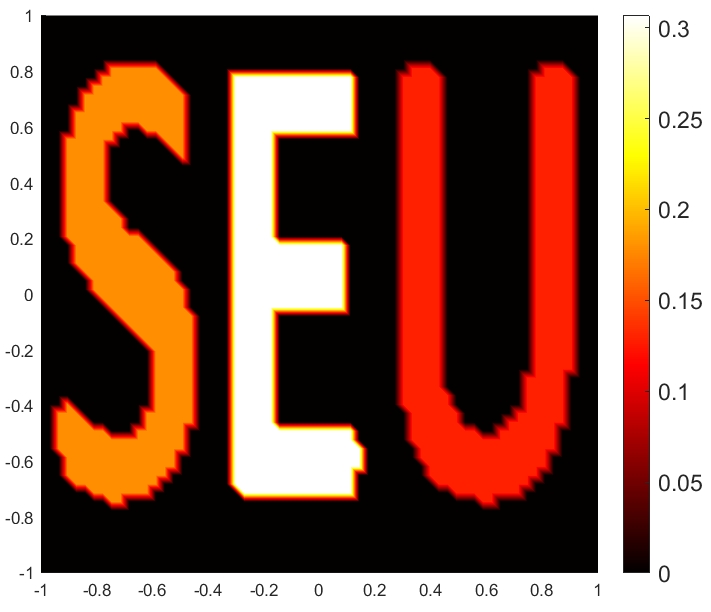}}
\end{minipage}
\begin{minipage}[t]{0.33\linewidth}
\subfigure[Reconstructed conductivity with $K$ = 16]
{\includegraphics[width=6cm,height=5.5cm]{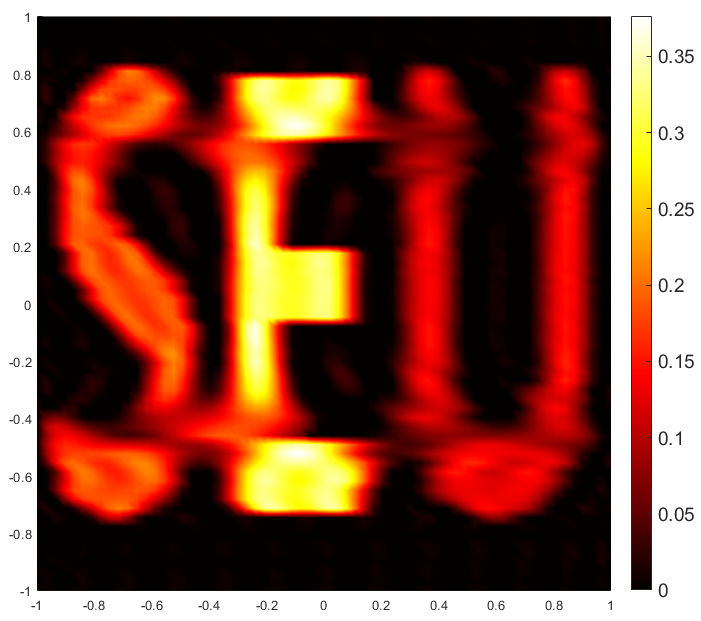}}
\end{minipage}
\begin{minipage}[t]{0.32\linewidth}
\subfigure[Reconstructed conductivity with $K$ = 64]{
\includegraphics[width=6cm,height=5.5cm]{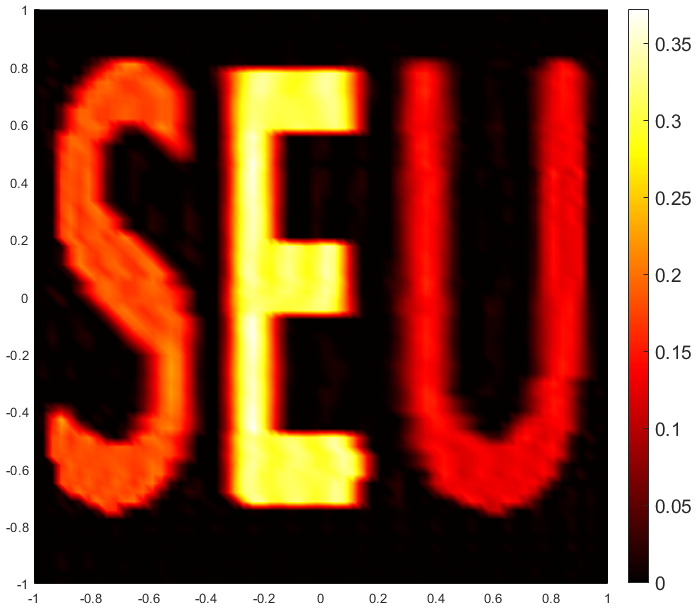}}
\end{minipage}
\caption{RPCD reconstruction results versus $K$ with $L=4$ BSs and SNR = 30 dB. 
Unit of conductivity is S/m.
}
\label{image_K}
\end{figure*}

\begin{figure}[t]
  \centering
\centerline{\includegraphics[width=8.4cm,height=6.5cm]{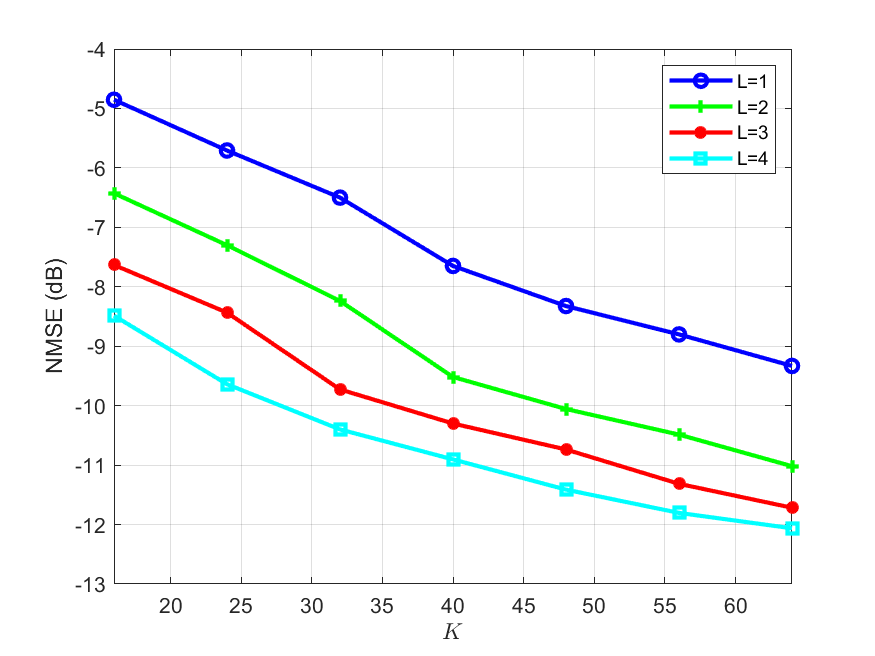}}
  \caption{\textcolor{black}{NMSE of PRCD versus $K$ with SNR = $30$ dB.}}
  \label{nmse_K}
\end{figure}

\subsection{RPCD Reconstruction Performance versus Bandwidth}

To demonstrate the RPCD reconstruction results, we present images of the target reconstructed 
based on relative permittivity and conductivity with $L = 4$ and SNR = $30$ dB, for $K=16$ and $64$, respectively. 
The target ``SEU" is the abbreviation of ``Southeast University", and the three letters are composed of wood, chipboard, and plasterboard, respectively.
As shown in Fig.~\ref{image_K}, the RPCD reconstructed with $K = 64$ is more accurate to reflect the target's shape compared to the RPCD reconstructed with $K = 16$. 
Moreover, a larger number of subcarriers yields more accurate reconstructed values of relative permittivity and conductivity, which leads to a better representation of the target's real EM property. 

We investigate the NMSE of reconstructed RPCD versus the bandwidth by changing the number of subcarriers $K$ while keeping the frequency step $\Delta f$ unchanged, as shown in Fig.~\ref{nmse_K}. 
It is seen that, NMSE decreases with the increase of $K$ in all cases, and the best performance is achieved when the signals from all 4 BSs are used. 
This is because the increased bandwidth enhances the diversity of the pilot patterns and reduces the mutual coherence of the sensing matrix.
Consequently, more information about RPCD is extracted, leading to better reconstruction accuracy. 






\subsection{Material Classification Accuracy versus SNR}
\begin{figure}[t]
  \centering  \centerline{\includegraphics[width=8.4cm,height=6.5cm]{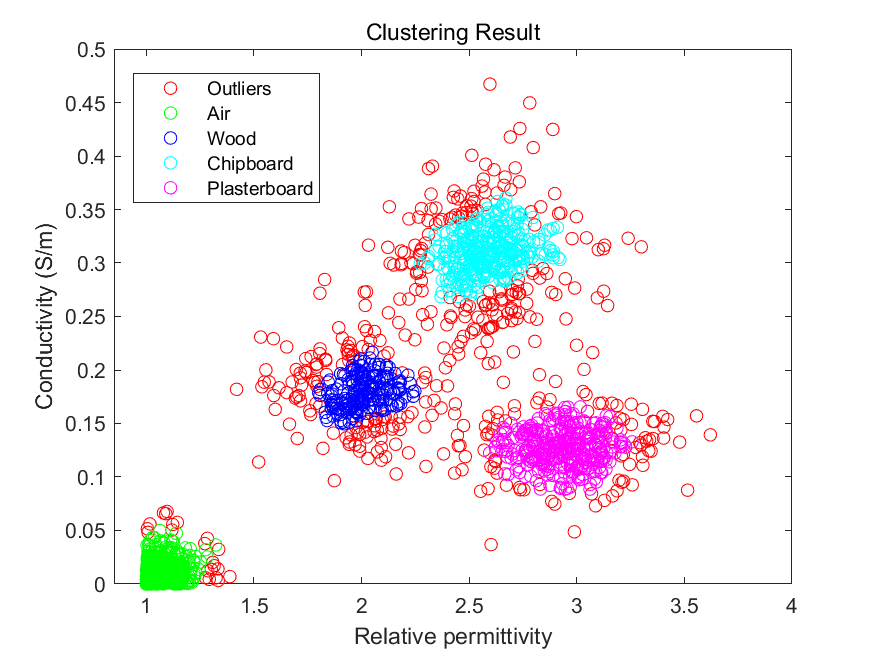}}  
  \caption{\textcolor{black}{Clustering results using the DBSCAN method.}}
  \label{dbscan}
\end{figure} 

\begin{figure}[t]
  \centering  \centerline{\includegraphics[width=8.4cm,height=6.5cm]{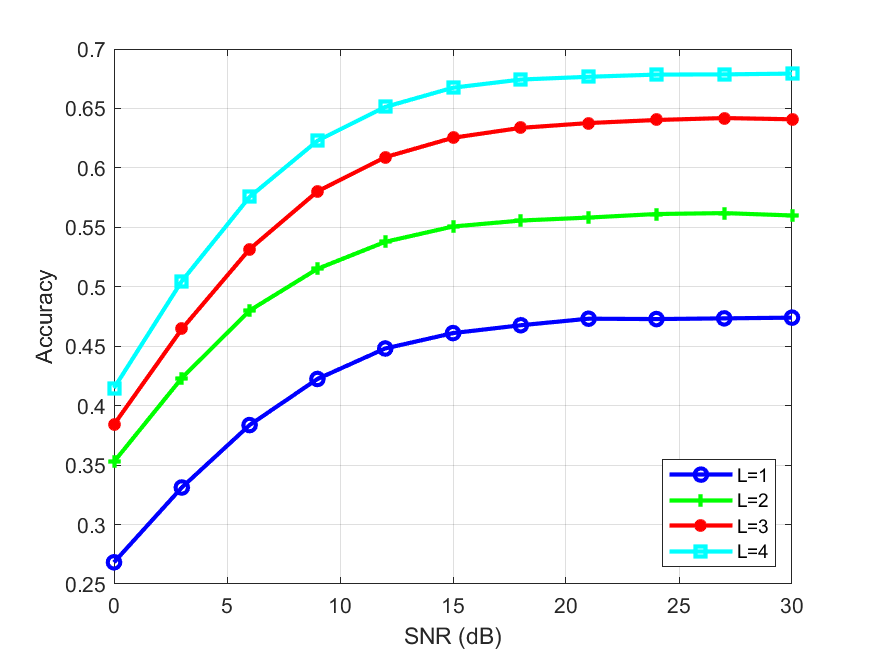}}  
  \caption{\textcolor{black}{Material classification accuracy versus SNR with $K = 32$.}}
  \label{accuracy_snr}
\end{figure}


Assume that the target may be composed of $10$ possible kinds of materials in the database, including wood, concrete, chipboard, etc, whose relative permittivity and conductivity have been precisely measured in advance. 
The shape of the target is set to be identical to ``THU" in Fig.~\ref{image1}, while the EM property of the target is decided by the material. 

To provide an example of classifying the materials in region $D$, 
we show the clustering results using DBSCAN in Fig.~\ref{dbscan} with $K = 32 $, SNR = $30$~dB, and $L=4$.
In Fig.~\ref{dbscan}, each data point represents a sampling point in region $D$, and the colors of the data points represent the classification results. 
It is observed that the cluster of the air is close to $(1,0)$,
whereas the clusters of the target materials are notably away from $(1,0)$. 
Therefore, this separation enables clear differentiation between sampling points associated with air and those associated with the target materials. 

Using the sampling points occupied by the target after removing the outlier points and the points associated with air,
we show the material classification accuracy versus SNR with different $L$ and $K = 32$ in Fig.~\ref{accuracy_snr}.  
It is seen that, the classification accuracy increases with the increase of SNR for all $L$, and the best performance is achieved with $L= 4$. 
When the SNR is larger than $15$~dB, the accuracies are larger than $45\%$ for all $L$.   
The accuracy increases rapidly when the SNR increases from $0$~dB to $10$~dB and reaches an upper bound at about $20$ dB. 
This is because larger SNR results in lower RPCD reconstruction errors, which offers higher classification accuracy according to Fig.~\ref{nmse1} and Fig.~\ref{accuracy_snr}. 


\subsection{Material Classification Accuracy versus Bandwidth} 
\begin{figure}[t]
  \centering  \centerline{\includegraphics[width=8.4cm,height=6.5cm]{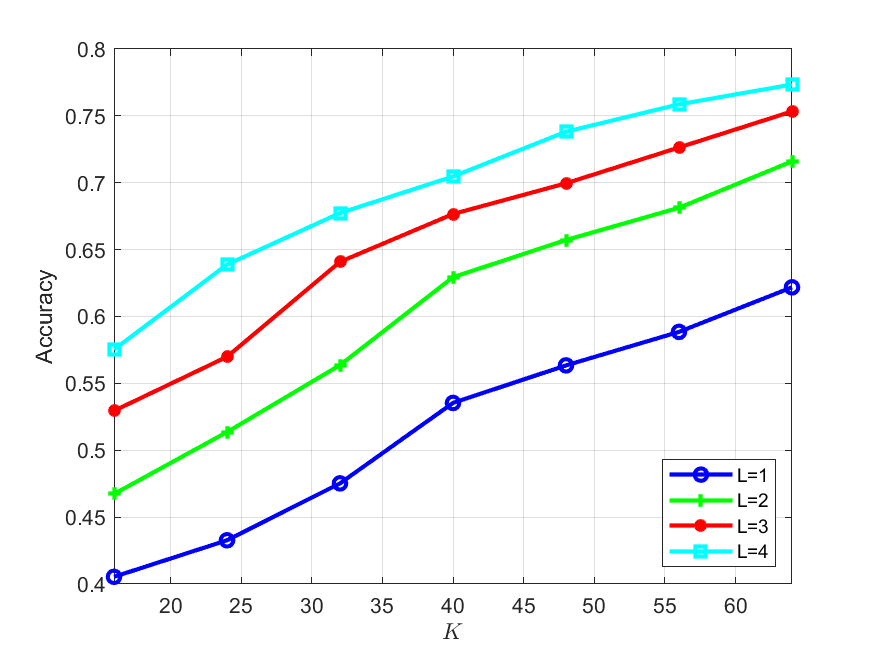}}  
  \caption{\textcolor{black}{Material classification accuracy versus $K$ with SNR = $30$ dB.}}
  \label{accuracy_K}
\end{figure}

With SNR = $30$ dB and under the same other simulation settings as in Section~\uppercase\expandafter{\romannumeral5}.C, we explore the material classification accuracy versus the bandwidth by changing the number of subcarriers $K$ while keeping the frequency step $ \Delta f $ unchanged, as shown in Fig.~\ref{accuracy_K}.  
It is seen that, the classification accuracy increases with the increase of $K$ in all cases. 
When $K$ is larger than $35$, the accuracies are larger than $50\%$ for all $L$. 
Generally speaking, larger $K$ leads to lower RPCD reconstruction errors, which further provides higher classification accuracy according to Fig.~\ref{nmse_K} and Fig.~\ref{accuracy_K}.

\subsection{ Influence of Channel Error }
\begin{figure}[t]
  \centering  \centerline{\includegraphics[width=8.4cm,height=6.5cm]{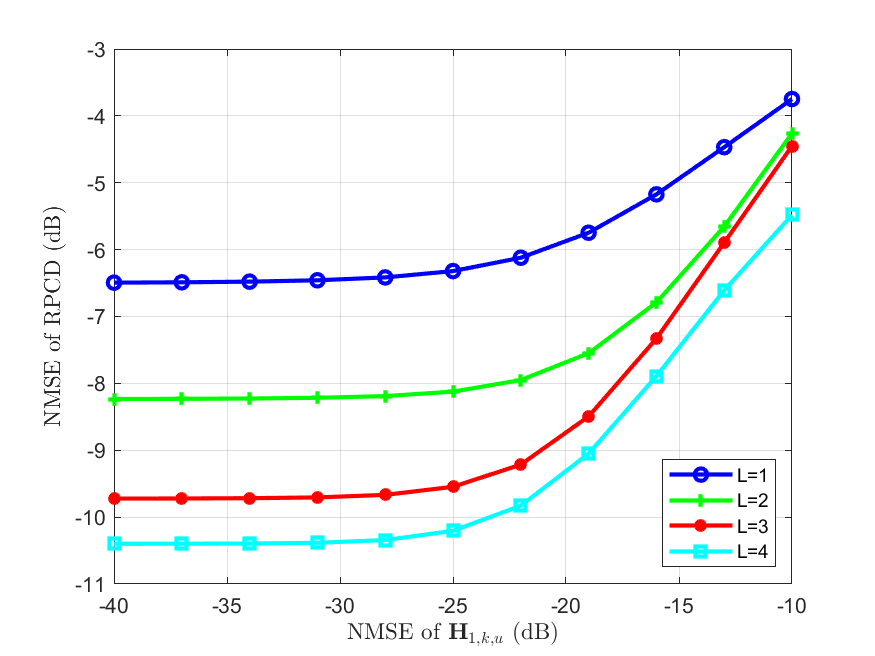}}  
  \caption{\textcolor{black}{NMSE of PRCD versus NMSE of $\mathbf{H}_{1,k,u}$ with SNR = $30$ dB and $K = 32$. }}
  \label{nmse_nmse1}
\end{figure}

\begin{figure}[t]
\vspace{-3mm}
  \centering  \centerline{\includegraphics[width=8.4cm,height=6.5cm]{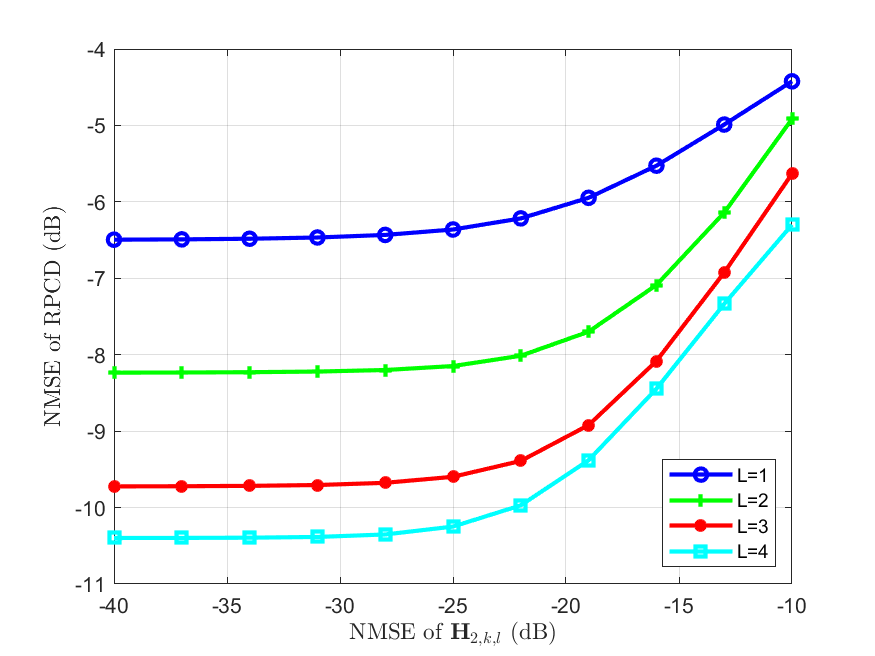}}  
  \caption{\textcolor{black}{NMSE of PRCD versus NMSE of $\mathbf{H}_{2,k,l}$ with SNR = $30$ dB and $K = 32$. }}
  \label{nmse_nmse2}
\end{figure}

We explore the influence of the channel error by adding CSCG noise to the computed channels $\mathbf{H}_{1,k,u}$ and $\mathbf{H}_{2,k,l}$.
The NMSE of PRCD versus NMSE of $\mathbf{H}_{1,k,u}$ with SNR = $30$ dB and $K = 32$ is shown in Fig.~\ref{nmse_nmse1}. 
It is seen that as the NMSE of $\mathbf{H}_{1,k,u}$ increases, the NMSE of the RPCD also rises for all values of $L$. Systems with more BSs consistently demonstrate better performance, exhibiting lower NMSE values of PRCD. 
An error floor appears when the NMSE of 
$\mathbf{H}_{1,k,u}$ is between -40 dB and -25 dB, where the NMSE curves flatten out for all values of $L$. In this range, reducing the channel error further does not significantly improve the  NMSE of PRCD. 
While the proposed method shows robustness to small channel errors of $\mathbf{H}_{1,k,u}$, increasing the number of BSs does not necessarily enhance its resilience to larger channel errors.


We explore the NMSE of PRCD versus NMSE of $\mathbf{H}_{2,k,l}$ with SNR = $30$ dB and $K = 32$  in Fig.~\ref{nmse_nmse2}. 
We observe phenomena similar to Fig.~\ref{nmse_nmse1}, but with smaller NMSE values under the same NMSE of channel error.
This improvement may be attributed to the fact that the BSs are relatively farther from the target than the UEs, which causes the channel $\mathbf{H}_{2,k,l}$ to approximate a far-field single-path model.
As a result, the effective degrees of freedom for the channel $\mathbf{H}_{2,k,l}$ are smaller than $\mathbf{H}_{1,k,u}$, which makes the sensing system more resilient to the errors of $\mathbf{H}_{2,k,l}$.
The proposed method demonstrates robustness to small channel errors of $\mathbf{H}_{2,k,l}$ with varying numbers of BSs.

\subsection{ Influence of Target's Location }
\begin{figure}[t]
  \centering  \centerline{\includegraphics[width=8.4cm,height=6.5cm]{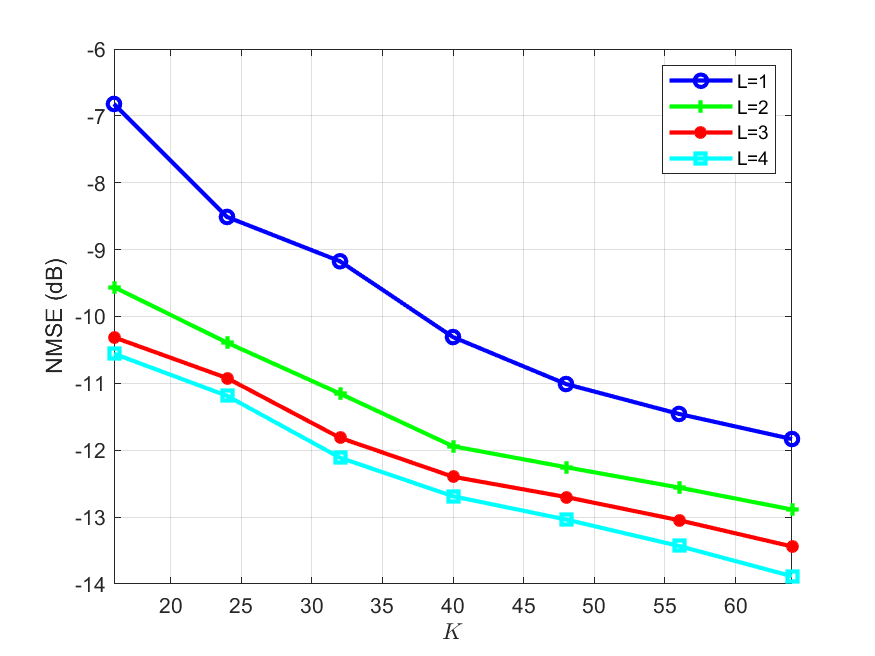}}  
  \caption{\textcolor{black}{NMSE of PRCD versus $K$ with SNR = $30$ dB and $D = [49,51] \\ \times [49,51]$~$\mathrm{m}^2$.}}
  \label{nmse_location}
\end{figure}

We explore the influence of the target's location by changing the region of the target into $D = [49,51]  \times [49,51]$~$\mathrm{m}^2$. 
The $ 10 $ UEs are randomly located in a circular region of radius $10$~m centered at $[50,50]$~m but outside region $D$.
The NMSE of PRCD versus $K$ with SNR = $30$ dB is shown in Fig.~\ref{nmse_location}, where the $4$ BSs located at $(100,0)$~m, $(0,100)$~m, $(-100,0)$~m, and $(0,-100)$~m are fused in sequence for $L=1,2,3,4$, respectively. 
We observe phenomena similar to Fig.~\ref{nmse_K} where the target region is at the center of multiple BSs, but with relatively smaller NMSE values under the same $K$.
The reason is that the BSs at $(100,0)$~m and $(0,100)$~m are closer to the target when the target is located within $D = [49,51]  \times [49,51]$~$\mathrm{m}^2$ than when the target is located within $D = [-1,1] \times [-1,1]$~$\mathrm{m}^2$.
The BSs at $(100,0)$~m and $(0,100)$~m can thus provide richer information of the target's EM property, which contributes to smaller NMSE values.
Moreover, when the target is located within $D = [49,51]  \times [49,51]$~$\mathrm{m}^2$, the BSs at $(100,0)$~m and $(0,100)$~m are closer to the target and thus have smaller NMSE of PRCD reconstruction than the BSs at $(-100,0)$~m and $(0,-100)$~m.
Adding the information collected by the BSs at $(-100,0)$~m and $(0,-100)$~m to the information collected by the BSs at $(100,0)$~m and $(0,100)$~m results in relatively small improvement in PRCD reconstruction.
Thus, the curves corresponding to $L=2,3,4$ are closer to each other with $D = [49,51]  \times [49,51]$~$\mathrm{m}^2$ than the curves corresponding to $L=2,3,4$ with $D = [-1,1] \times [-1,1]$~$\mathrm{m}^2$ in Fig.~\ref{nmse_K}.

\section{Limitations, Potential Challenges, Practical Implementation, and Future Directions}

The proposed EM property sensing framework, while effective, has several limitations.
The framework assumes static environments and is designed to sense static target.
Moreover, the multi-BS data fusion adds extra computational overhead, which may present challenges in resource-constrained environments.

Potential challenges include handling high-dimensional data and ensuring real-time processing, especially as more BSs are added. 
Accurate sensing and synchronization across multiple BSs also pose potential challenges, especially in dynamic environments with moving targets.

The practical implementation of the proposed EM property sensing framework requires both advanced hardware and software components. On the hardware side, multiple BSs must be equipped with multi-antenna arrays and have robust communication links.
Additionally, high-performance processors are necessary to handle the computational complexity of data fusion and signal processing, especially for real-time applications.
On the software side, the system requires efficient algorithms related to pilot design, compressive sensing, and data fusion for EM property sensing and material classification. Real-time operating systems or parallel computing frameworks would be essential to meet the processing speed and resource management demands.

Future directions include extending the framework to sense the EM property of moving targets, which could be beneficial in areas like surveillance and healthcare.
Incorporating machine learning techniques for enhanced material identification and exploring adaptive sensing strategies to handle dynamic conditions are promising avenues.
Moreover, improving robustness to noise and CSI errors, along with refining multi-BS fusion techniques, could significantly enhance the framework's real-world applicability.
Potential improvements in pilot design could involve designing pilots that simultaneously perform both channel estimation and EM property sensing, optimizing resource use. Fusion algorithms could be enhanced by dynamically adjusting the fusion weights across BSs, further improving performance.
Additionally, the proposed MACE method could be extended to other sensing tasks that can be formulated as the optimization of a sum of functions constructed by BSs, such as localization or velocity estimation, making it versatile for a wide range of applications.

\section{Conclusion}
In this paper, we propose an innovative scheme employing OFDM pilot signals in ISAC with multiple BSs for effective EM property sensing and material identification. 
The proposed framework not only facilitates the precise estimation of RPCD but also enhances material classification through a collaborative network approach.
Key contributions of the proposed framework include the establishment of a comprehensive EM property sensing mechanism in ISAC with multiple BSs, 
the implementation of a feature-level multi-BS data fusion strategy to optimize EM property reconstruction, 
and the strategic design of multi-subcarrier pilots to minimize column mutual coherence.
The simulation outcomes demonstrate the efficacy of the proposed framework, highlighting its capability to achieve high-quality RPCD reconstruction and accurate material classification. 
Additionally, the simulation results suggest that the reconstruction quality and classification accuracy can be significantly improved by optimizing the SNR at the receivers or by expanding the number of BSs involved in data collection. 

\color{black}

 \small 
 \bibliographystyle{ieeetr}
 \bibliography{IEEEabrv,mainbib}

\ifCLASSOPTIONcaptionsoff
  \newpage
\fi 



%

\end{document}